\documentclass[11pt,reqno]{amsart}
\usepackage[T1]{fontenc}
\usepackage[utf8]{inputenc}
\usepackage{mathpazo}
\usepackage{amsmath,amssymb,amsthm,mathtools}
\usepackage[margin=1.05in]{geometry}
\usepackage{microtype}
\usepackage{xcolor}
\usepackage{graphicx}
\graphicspath{{figures/}{output/pdf/figures/}}
\usepackage{booktabs}
\usepackage{needspace}
\usepackage{tikz}
\usetikzlibrary{arrows.meta}
\usepackage[colorlinks=true,linkcolor=blue!45!black,citecolor=blue!45!black,urlcolor=blue!45!black]{hyperref}
\hypersetup{pdftitle={Boundary routing and memory in the three-dimensional Manhattan mirror model},pdfauthor={}}
\numberwithin{equation}{section}
\newtheorem{theorem}{Theorem}[section]
\newtheorem{proposition}[theorem]{Proposition}

\theoremstyle{definition}

\theoremstyle{remark}

\newcommand{\Z}{\mathbb Z}

\newcommand{\Pp}{\mathbb P_p}
\newcommand{\Ep}{\mathbb E_p}
\newcommand{\ind}{\mathbf1}

\DeclareMathOperator{\tr}{tr}

\newcommand{\FirstAuthor}{Jian Gu}
\newcommand{\FirstAffiliation}{ESSEC}
\newcommand{\SecondAuthor}{Paul Schaeffer}
\newcommand{\SecondAffiliation}{Polytechnic Institute of Paris}
\newcommand{\ThirdAuthor}{Qin Hao}
\newcommand{\ThirdAffiliation}{Polytechnic Institute of Paris}
\title[Routing and memory in the three-dimensional Manhattan model]{Routing and memory in the three-dimensional mirror model on K-lattice}
\author[\FirstAuthor]{\FirstAuthor}
\address{\FirstAffiliation}
\author[\SecondAuthor]{\SecondAuthor}
\address{\SecondAffiliation}
\author[\ThirdAuthor]{\ThirdAuthor}
\address{\ThirdAffiliation}
\makeatletter
\renewcommand{\@setauthors}{%
  \begingroup\trivlist\centering\footnotesize
  \@topsep30\p@\relax\advance\@topsep by -\baselineskip
  \item\relax
  \begin{minipage}[t]{.30\textwidth}\centering
    {\scshape\FirstAuthor\par}\smallskip
    {\normalfont\footnotesize\FirstAffiliation\par}
  \end{minipage}\hfill
  \begin{minipage}[t]{.30\textwidth}\centering
    {\scshape\SecondAuthor\par}\smallskip
    {\normalfont\footnotesize\SecondAffiliation\par}
  \end{minipage}\hfill
  \begin{minipage}[t]{.30\textwidth}\centering
    {\scshape\ThirdAuthor\par}\smallskip
    {\normalfont\footnotesize\ThirdAffiliation\par}
  \end{minipage}
  \endtrivlist\endgroup}
\renewcommand{\@setaddresses}{}
\makeatother

\subjclass[2020]{60K37, 82B41, 60K35, 05C38}
\keywords{Manhattan mirror model, three-dimensional K-lattice, delocalization, boundary scattering, quenched memory, numerical experiments}

\theoremstyle{definition}
\newtheorem{finding}[theorem]{Numerical finding}
\newcommand{\EE}{\mathbb E}

\begin{document}
\raggedbottom
\begin{abstract}
We study numerical observables relevant to delocalization in the
three-dimensional Manhattan mirror model, also known as the K-lattice.
Our principal experiment separates the locations of transmitting boundary
channels from the routing between them. We derive an exact conditional
comparison that preserves all transmission and reflection aperture sets,
and find a statistically resolved deficit of physical transmission relative
to this comparison. Nearby transmitting channels are themselves suppressed
relative to a uniform subset with the same cardinality. A second experiment
resolves the contribution of repeated vertices by their return age. The
squared norm of the signed old-memory remainder is consistent with an
inverse-square-root decay over the measured range. Annular experiments
quantify decreasing continuation losses, while an initially apparent
history dependence fails independent replication. Finally, a periodic
configuration has a connected auxiliary contact graph but only finite
routing cycles. These results identify estimates on reflection routing and
adaptively selected revisit charges that may assist a rigorous proof.
They do not establish an infinite-volume delocalized phase.
\end{abstract}
\maketitle

\section{Introduction}\label{sec:intro}

The three-dimensional Manhattan mirror model is a deterministic routing
system in an independent random environment. Its underlying graph has four
edges at each vertex, with two incoming and two outgoing directions. Each
vertex carries a single Bernoulli variable: a ray changes axis at a mirror
and continues straight otherwise. Successive visits to the same physical
vertex use the same variable. Consequently, even an individual trajectory
retains information about the environment it has already explored.

Cardy introduced this lattice as a three-dimensional setting in which the
quantum--classical correspondence for class-C networks might clarify the
existence of extended states \cite{Cardy,BeamondCardyChalker}. Subsequent
work calls it the K-lattice and studies its extended and short-loop phases
\cite{Nahum2011,Nahum2013}. We use \(p\) for the mirror, or turning,
probability throughout. This is the complement of the straight probability
used for the K-lattice in those papers. Section~\ref{sec:model} specifies
the graph, orientation, and stopping convention.

Our purpose is to measure quantities that distinguish possible proof
mechanisms. Bulk displacement or a large finite-volume loop alone does not
control how an exposed trajectory interacts with previously visited sites.
Likewise, a transmission count does not describe the full scattering map of
a block. We therefore investigate boundary routing conditional on its
aperture sets, revisit charges resolved by their age, and continuation
between successive spatial scales.

\subsection{Main results}

For two independently sampled slabs, let \(G\) be the number of paths
transmitting from the left exterior face to the right exterior face after
physical gluing. The boundary scattering map of each slab has four blocks,
classified by the input and output faces. Keep the exact domain and image
of every block, and replace its routing by a uniformly random bijection
between these two sets. Let \(H\) be the resulting conditional expected
transmission. This comparison preserves the complete spatial locations of
all transmitting and reflecting ports. Theorem~\ref{thm:fixedsets} gives
an exact formula for \(H\).

\Needspace{7\baselineskip}
\begin{finding}[Transmission with fixed aperture sets]\label{finding:routing}
For \(512\) independent pairs of slabs of width and thickness \(128\),
at mirror probability \(p=0.72\), the empirical means are
\[
 \widehat{\EE G}=1.85938,\qquad
 \widehat{\EE H}=2.06788.
\]
The paired mean of \(H-G\) is \(0.20850\), with Monte Carlo standard
error \(0.03224\). The corresponding deficits at \(p=0.7\) and at
width \(64\) are reported in Table~\ref{tab:gluing}.
\end{finding}

The physical mean in this experiment is approximately \(10.1\%\) below
the comparison mean. Since the terminal sets remain fixed, the comparison
isolates dependence in the reflection routes beyond the positions of the
transmission channels. The rematched boundary permutations are an
artificial comparison; they need not be realizable by local mirrors.

For the second experiment, a finite-state corrector gives the exact stopped
decomposition \(Y_n=M_n-R_n\), where \(M_n\) sums centered charges at
freshly visited vertices and \(R_n\) records charges selected by a second
visit. Write \(Q_n\) for the trace of the predictable quadratic variation
of \(M\). For a first visit at time \(s\) and second visit at time \(t\),
define the kinetic age \(p(t-s)\). The part of \(R_n\) arising from ages
at least \(L\) is denoted by \(R_n^{\ge L}\).

\begin{finding}[The signed old-memory remainder]\label{finding:memory}
Using \(2{,}048\) independent trajectories at each of
\(p=0.025,0.05,0.1,0.2\), the statistic
\[
 B_p(n,L)=\frac{\Ep|R_n^{\ge L}|^2}{p^2\Ep Q_n}
\]
is consistent with an inverse-square-root age tail over the measured
range. At \(n=2^{20}\), finite-horizon corrected exponent fits over
\(L=32,64,\ldots,1024\) are \(0.504,0.500,0.500,0.494\), respectively.
All four whole-trajectory bootstrap intervals contain \(1/2\).
\end{finding}

This observable is the squared norm of a signed, adaptively selected
vector sum. It is not obtained from the number of revisits alone.
Section~\ref{sec:memory} specifies the finite-horizon fit and its
limitations, and Appendix~\ref{app:corrector} proves the corrector
identity used to define the observable.

The following exact result rules out a simple deterministic implication
from connectivity to transport.

\Needspace{7\baselineskip}
\begin{theorem}[A connected contact graph with finite routing]
\label{thm:periodic-obstruction}
There is a periodic configuration of the three-dimensional Manhattan
mirror model for which the contact graph of all-turn hexagons is connected
throughout \(\Z^3\), but every directed trajectory closes after exactly
\(48\) steps.
\end{theorem}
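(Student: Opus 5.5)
The plan is an explicit construction followed by a finite check on one period.

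\emph{Conventions.} I will use the standard K-lattice orientation from Section~\ref{sec:model}. Edges parallel to the \(x\)-axis are oriented by the parity of \((y,z)\), and similarly for the other axes. Each vertex then has two incoming and two outgoing directions. A mirror at a vertex swaps the axis, and an all-turn hexagon is a directed elementary cycle of length \(6\) all of whose vertices carry mirrors. Two hexagons are in contact when they share a vertex or an edge, as in the paper's definition of the contact graph. I will take a configuration that is periodic under translations by \(4\Z^3\), or possibly a sublattice of period \(8\) in one direction.

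\emph{Choice of the mirror set.} I take the mirror set \(S\) to be a union of all-turn hexagons. These are placed on the translates of a fixed hexagon by a lattice \(\Lambda\), together with its images under the cubic symmetries that preserve the orientation. I choose \(\Lambda\) so that consecutive hexagons along each of the three axis directions share exactly one vertex or edge. This makes the contact graph contain a copy of \(\Z^3\) on \(\Lambda\), and hence be connected.

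\emph{Reducing the dynamics to one period.} The dynamics is deterministic and invertible, because the map on directed edges is a bijection. By periodicity it therefore suffices to follow the finitely many directed edges in one fundamental domain. The key claim is that every trajectory returns to its starting directed edge after \(48\) steps with zero net displacement, not merely modulo the period. I will organize this verification by symmetry. The symmetry group of the construction acts on directed edges, so I only need to trace one representative per orbit. I expect few orbits: edges lying on hexagons, and straight edges between hexagons. For each representative I will record the sequence of axis changes and show that the displacement vector sums to zero after \(48\) steps. The number \(48 = |O_h|\) suggests that each orbit is a single closed loop which visits one edge for each element of the full octahedral group, \(48\) in all. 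I would use this as the organizing principle: the loop should be a symmetric "cuboctahedral" circuit around a vacant cell.

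\emph{Main obstacle.} The hard step is the tension between connectivity and confinement. Contacts between hexagons tend to let a ray that enters a shared vertex switch from one hexagon's cycle to a neighbour's, and this could produce infinite drift. I would first try arranging the contacts so that shared vertices are traversed by rays which, at those vertices, stay on a bounded loop around a vacant cube. If this fails, I would search by computer over configurations with period \(4\), with the hexagon lattice fixed. The target is one in which all cycles of the edge permutation have length \(48\) and zero winding. I would then present the found configuration explicitly and verify it by the orbit table.
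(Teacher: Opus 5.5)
There is a genuine gap, and it starts with what the contact graph is. In the statement, the ``all-turn hexagons'' are the elementary hexagons $\mathcal C_h$, $h\in\mathcal H$, into which the \emph{reference} all-turn configuration partitions the directed edges. All of them are present whatever the actual mirrors are. They are edge-disjoint, so two of them never share an edge. Each physical vertex is a contact between exactly two of them, and the full contact graph is always $\Z^3$ via $U$. The graph whose connectivity is asserted is the \emph{open} subgraph: a contact is an edge exactly when its physical vertex is \emph{straight}.

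Your construction does the opposite. It puts mirrors at all six corners of the chosen hexagons and counts shared mirror vertices as contacts. At a turning corner, however, a ray continues around its own hexagon. Hence every hexagon whose six corners all carry mirrors is a closed orbit of length $6$. Your configuration therefore has $6$-step orbits, contradicting ``exactly $48$''. In the intended graph those hexagons are also isolated vertices, since all six of their contacts are closed, so connectivity fails too. Your ``main obstacle'' is inverted for the same reason: rays switch hexagons only at straight vertices, never at shared mirror vertices.

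Read correctly, the task is to choose a periodic set of open bonds of $\Z^3$ that is connected and spanning, yet whose local reconnections split every cluster's edges into finite loops. Two facts guide this. An infinite tree component forces infinite orbits (Proposition~\ref{prop:infinite-tree}), so the open graph must contain cycles. Each bond beyond a spanning tree either joins or splits loops (Proposition~\ref{prop:cycle-surplus}), so enough of them must be splits.

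The paper supplies an explicit $2\Z^3$-periodic choice in contact coordinates: $13$ of the $24$ positive bonds per cell (Table~\ref{tab:periodic-bonds}). It proves connectivity from a connected quotient together with open lifts displaced by $2e_1$, $2e_2$ and $2(e_1+e_2+e_3)$. It then lists the $48$ quotient states as a single routing cycle whose signed crossings balance in each coordinate. The number $48$ is $8$ hexagons times $6$ edges in the contact period cell, not $|O_h|$.

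Your reduction to one period, and your insistence on zero net displacement rather than closure modulo the period, are correct and match the paper. The same holds for your guess of physical period $4\Z^3$. But you defer the configuration itself to a computer search, and the theorem is an existence statement. Until an explicit configuration is given and checked, it is not proved.
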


The contact graph and the explicit configuration are constructed in
Appendix~\ref{app:geometry}. The example is deterministic, rather than an
event of positive probability under the iid law. It shows that a
percolation argument must retain information about the routing inside
connected components.

\subsection{Further experiments and organization}

Transmission apertures exhibit a suppression of nearby pairs relative to
a uniform set of the same size. Annular experiments measure the
conditional probability of closure between radii \(R\) and \(2R\).
An exploratory association with first-exit time was tested in two
independent cohorts and did not reproduce its initial strength.

Sections~\ref{sec:gluing}--\ref{sec:annular} present the boundary,
memory, and annular experiments. Section~\ref{sec:methods} explains
sampling and uncertainty estimates. Section~\ref{sec:discussion}
formulates the analytical questions suggested by the data. The exact
comparison formula, corrector identities, contact construction, and
sufficient escape criteria are proved in the appendices. We distinguish
these exact statements from Monte Carlo findings throughout.

\subsection{Relation to earlier work}

The numerical study of localization has its origins in Anderson's model
\cite{Anderson1958} and in the scaling theory of conductance
\cite{AALR}. The three-dimensional Anderson transition provides a
particularly developed setting for finite-size analysis. Early numerical
conductivity calculations treated both two and three dimensions
\cite{KramerMacKinnonWeaire1981}. MacKinnon and
Kramer used localization lengths in long strips and bars to construct
numerical scaling flows \cite{MacKinnonKramer1981,MacKinnonKramer1983}.
Subsequent transfer-matrix studies quantified corrections to scaling and
refined estimates of critical parameters in the three-dimensional
orthogonal universality class
\cite{SlevinOhtsuki1999,SlevinOhtsuki2014,SlevinOhtsuki2018}.

Complementary numerical diagnostics include energy-level statistics
\cite{Shklovskii1993}, multifractal finite-size scaling of critical
wave functions \cite{Rodriguez2010,Rodriguez2011}, and the scaling of
conductance averages and distributions
\cite{SlevinMarkosOhtsuki2001,SlevinMarkosOhtsuki2003}.
Multifractal analyses also compare the three Wigner--Dyson symmetry
classes in three dimensions \cite{UjfalusiVarga2015}.
Boundary conditions can affect critical conductance distributions
\cite{SlevinOhtsukiKawarabayashi2000}. These works motivate testing
several observables, keeping the boundary geometry explicit, and
separating sampling uncertainty from finite-size corrections. Our
conditional boundary comparison and stopped-path memory statistics
address different observables; we do not identify their measured
exponents with the Anderson localization-length exponent.

Numerical studies of the spin quantum Hall transition in planar class-C
networks \cite{Kagalovsky1999} were complemented by exact critical
exponents obtained through a mapping to two-dimensional percolation
\cite{GruzbergLudwigRead1999}. The quantum--classical correspondence
identifies certain averaged
observables of class-C network models with classical routing quantities
\cite{BeamondCardyChalker,Cardy}. Class C belongs to the symmetry
classification of disordered superconducting systems
\cite{AltlandZirnbauer1997}; it differs from the ordinary orthogonal
Anderson class discussed above. The broader classification and numerical
phenomenology are reviewed by Evers and Mirlin \cite{EversMirlin2008}.
Numerical studies of coupled class-C layers also exhibit a
three-dimensional metallic phase \cite{Kagalovsky2004}.
The planar Manhattan lattice has a
different phase structure from its three-dimensional counterpart
\cite{BeamondOwczarekCardy}. Large-scale numerical work on the
three-dimensional diamond lattice \cite{Diamond} concerns a different
orientation and geometry; its numerical transition parameters should not
be transferred to the K-lattice.

The K-lattice literature already studies extended loops, stiffness and
winding observables, and the relation to sigma models
\cite{Nahum2011,Nahum2013}. Brownian finite-loop statistics and
Poisson--Dirichlet macroscopic-loop statistics are developed in
\cite{LoopSoups}. Neither the existence of an extended phase in numerical
experiments nor the familiar three-dimensional return exponent is a new
claim of this paper. Our measurements concern dependence in boundary
reflection maps after their aperture sets are fixed, and the signed
revisit remainder of a corrector decomposition.

Slab concatenation and random-matching comparisons also appear in recent
work on Lorentz mirror transport
\cite{LefevereConductivity,LefevereTasaki}. The latter reference gives
an exactly analyzed hierarchy with uniform undirected interface
matchings. Our comparison retains the physical gluing and the domain and
image of each directed scattering block, and randomizes its internal
bijection. The numerical question is the discrepancy between that
conditional comparison and the original mirror routing; the general
use of random matching as a transport comparison is established prior
work.
Rigorous results for related mirror models address escape probabilities,
cylindrical geometry, and planar Manhattan trajectories
\cite{KS,Ryan,LiManhattan}. These questions also belong to the
broader study of localization in random environments, including discrete
Schr\"odinger operators \cite{LiBernoulli2D,LiZhangBernoulli3D}.

\section{Model and experimental design}\label{sec:model}

\subsection{The four-valent graph and quenched routing}

In coordinates where every graph edge has length one, let
\begin{equation}\label{eq:vertices}
 V=\{x\in\Z^3:x\bmod 2\notin\{(0,0,0),(1,1,1)\}\}.
\end{equation}
At each vertex, precisely two coordinates have the same parity. Their
axes are active, and the graph contains the edges \(x\leftrightarrow
x\pm e_a\) for those two axes. Equivalently, axis \(a\) is active
when the two transverse coordinate parities differ. Orient the active
lines by
\begin{equation}\label{eq:orientation}
 s_1(x)=(-1)^{x_2},\qquad
 s_2(x)=(-1)^{x_3},\qquad
 s_3(x)=(-1)^{x_1}.
\end{equation}
Directions are constant along a line and reversed on nearest parallel
lines. This gives the K-lattice convention of \cite[Section II]{Nahum2013}.
The coordinates are twice those in Cardy's original construction.

A directed state is \((x,a)\), representing arrival at \(x\) along an
active axis \(a\). Sample independent \(\omega_x\sim
\operatorname{Bernoulli}(p)\). If \(\omega_x=0\), the outgoing axis
is \(b=a\); otherwise it is the other active axis. The successor is
\begin{equation}\label{eq:successor}
 (x,a)\longmapsto (x+s_b(x)e_b,b).
\end{equation}
The local pairing is a bijection, so the global successor is a permutation
of directed states. On an infinite trajectory no directed state repeats.
On a finite trajectory the first repeated directed state is the root.
A vertex may be visited twice on its two different incoming axes before
closure; those visits share \(\omega_x\).

For infinite-lattice experiments we start at \(x_0=(1,0,0)\), with
incoming axis \(2\), and write \(\tau\) for its first directed-state
return. Observables are frozen at \(\tau\). Coordinates are never
reduced modulo a period. At \(p=0\) the trajectory is a straight line;
at \(p=1\) it is a six-step hexagon.

\subsection{Slabs and scattering maps}

The boundary experiments use even width \(W\) and thickness \(W\),
with periodic transverse coordinates. Cutting all edges through the two
longitudinal faces gives \(N=W^2/4\) input ports and the same number of
output ports on each face. An incoming boundary trajectory exits at a
unique outgoing port: it cannot enter an internal cycle because a directed
state has a unique predecessor. The scattering map is therefore a
permutation of \(2N\) ports. Internal cycles are allowed and have no
boundary input.

The two directional transmission counts coincide. Indeed, the number of
left inputs not mapped to left outputs equals the number of left outputs
not supplied by left inputs. Denote this common count by \(T\). In
each gluing experiment two independent slabs are joined using the actual
geometric identification of the interface ports. Their concatenation has
thickness \(2W\) and unchanged transverse width \(W\).

All reported transmissions are numbers of directed paths. They omit the
factor of two in the corresponding class-C averaged conductance formula
\cite{Cardy}. The same normalization is used for the physical and
comparison networks.

\subsection{The numerical campaigns}

The boundary campaign contains \(18{,}632\) independent single slabs
and \(3{,}840\) independent pairs for gluing. The latter use widths
\(64,128\) and \(p=0.5,0.7,0.72\). The single-slab campaign includes
additional widths and probabilities for aperture and return-displacement
measurements. The memory campaign contains \(8{,}192\) independent
unwrapped root trajectories, each recorded at three horizons. The annular
campaign contains \(212{,}993\) distinct parameter--seed samples;
Section~\ref{sec:annular} states the cohorts entering each estimate.

Each Monte Carlo unit is an independent environment or pair of
environments. Several channels in one slab, and several observation times
along one trajectory, are not treated as independent samples.

\section{Boundary routing with fixed aperture sets}\label{sec:gluing}

\subsection{An exact conditional comparison}

Let \(A\) and \(B\) be the left and right slab scattering maps,
with transmission counts \(t\) and \(u\). Identify their two interface
port spaces with \(I_+\) and \(I_-\), each of size \(N\), according
to the direction of travel. Define
\begin{align*}
 X&=\{\hbox{outputs in }I_+\hbox{ supplied by left exterior inputs of }A\},\\
 Y&=\{\hbox{inputs in }I_+\hbox{ transmitted through }B\},\\
 Z&=\{\hbox{outputs in }I_-\hbox{ supplied by right exterior inputs of }B\},\\
 U&=\{\hbox{inputs in }I_-\hbox{ transmitted through }A\}.
\end{align*}
Thus \(|X|=|U|=t\) and \(|Y|=|Z|=u\). Put
\begin{equation}\label{eq:overlaps}
 a=|X\cap Y|,\qquad b=|Z\cap U|.
\end{equation}
Randomize the bijection within each of the four scattering blocks of each
slab, retaining its domain and image. The choices are mutually
independent and uniform. For the total transmission count only the two
interface reflection bijections affect the outcome once these sets are
fixed.

\begin{theorem}[Exact comparison with fixed aperture sets]
\label{thm:fixedsets}
Assume \(0<t,u<N\) and \(a<\min(t,u)\). Put
\(K=\min\{N-t,N-u\}\). Let \(s_0=1\) and, for \(0\le k<K\),
\begin{equation}\label{eq:survivalrecursion}
 s_{k+1}=s_k
 \frac{N-t-u+b-k}{N-u-k}
 \frac{N-t-u+a-k}{N-t-k}.
\end{equation}
Then the conditional expected transmission under the preceding
randomization is
\begin{equation}\label{eq:fixedsets}
 H_N(t,u,a,b)=a+(t-a)\sum_{k=0}^{K-1}s_k
 \frac{N-t-u+b-k}{N-u-k}
 \frac{u-a}{N-t-k}.
\end{equation}
After a zero survival factor all subsequent terms are set to zero.
If \(t=0\) or \(u=0\), the value is zero. If \(t=N\) or
\(u=N\), it is \(\min(t,u)\). If \(a=\min(t,u)\), it is \(a\).
\end{theorem}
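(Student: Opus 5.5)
The plan is to follow a single path through the glued network and compute its transmission probability exactly by revealing the two random interface bijections one step at a time. This uses linearity of expectation and the fact that a uniform bijection is exchangeable.

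\emph{Setup.} Only two random maps matter. The first is the reflection block of \(B\) from \(I_+\) back to \(I_-\). It is a uniform bijection \(\beta\colon I_+\setminus Y\to I_-\setminus Z\), and both sets have size \(N-u\). The second is the reflection block of \(A\) from \(I_-\) back to \(I_+\). It is a uniform bijection \(\alpha\colon I_-\setminus U\to I_+\setminus X\), and both sets have size \(N-t\). The maps \(\alpha\) and \(\beta\) are independent.

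The glued transmission count \(G\) counts the left exterior inputs whose trajectory reaches the right exterior. Each such input enters \(I_+\) at a port of \(X\). A port \(x\in X\) then behaves as follows.
\begin{itemize}
\item If \(x\in X\cap Y\), it transmits immediately. These ports contribute exactly \(a\).
\item If \(x\in X\setminus Y\), it is reflected by \(\beta\) into \(I_-\setminus Z\).
\end{itemize}
A port of \(I_-\) is then handled as follows.
\begin{itemize}
\item If it lies in \(U\), it transmits back through \(A\) to the left exterior and is lost.
\item Otherwise \(\alpha\) sends it to \(I_+\setminus X\).
\end{itemize}
There it either lies in \(Y\), and exits to the right, or it is reflected by \(\beta\) again, and the cycle repeats. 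Hence
\[
H=a+\sum_{x\in X\setminus Y}\Pp(x\text{ transmits}).
\]
It suffices to show that each of these \(t-a\) probabilities equals the sum in \eqref{eq:fixedsets}.

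\emph{Sequential revelation.} Fix \(x\in X\setminus Y\) and expose \(\beta\) and \(\alpha\) only at the ports the path visits. Since the global successor is a permutation, no port is visited twice. Conditionally on the first \(k\) values of \(\beta\) revealed along the path, the next image is uniform among the \(N-u-k\) unused points of \(I_-\setminus Z\).

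The previously used images all lie outside \(U\), since otherwise the path would already have stopped. So all \(|U\cap(I_-\setminus Z)|=t-b\) ports of \(U\) remain available. The path therefore survives this half-step with probability
\[
\frac{N-u-k-(t-b)}{N-u-k}=\frac{N-t-u+b-k}{N-u-k}.
\]

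Likewise, the next image under \(\alpha\) is uniform among the \(N-t-k\) unused points of \(I_+\setminus X\). The start \(x\) is not in this set, and the earlier visited points are non-\(Y\) ports. So all \(|Y\setminus X|=u-a\) ports of \(Y\) are still available. The path therefore transmits at this half-step with probability \((u-a)/(N-t-k)\), and continues with probability \((N-t-u+a-k)/(N-t-k)\).

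Multiplying these conditional probabilities gives the survival recursion \eqref{eq:survivalrecursion} for \(s_k\), the probability of completing \(k\) full cycles. Summing the probability of first transmission at cycle \(k\) then yields \eqref{eq:fixedsets}. The sum stops at \(K\) because the path must terminate within \(\min(N-t,N-u)\) cycles, and once a factor vanishes every later term is zero.

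\emph{Degenerate cases.} These are checked directly.
\begin{itemize}
\item If \(t=0\) or \(u=0\), no port can transmit, so the value is \(0\).
\item If \(t=N\), the set \(I_-\setminus U\) is empty, so every reflected path is lost. Then \(H=a\), and \(a=|X\cap Y|=|Y|=u=\min(t,u)\). The case \(u=N\) is symmetric.
\item If \(a=\min(t,u)\), then either \(X\setminus Y\) or \(Y\setminus X\) is empty, and in both cases only the \(a\) direct ports count.
\end{itemize}

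The main obstacle is justifying the sequential-revelation step rigorously. One must show that, conditionally on the revealed history, the next image of a uniform bijection is uniform on the unused points, and that alternating between the independent maps \(\alpha\) and \(\beta\) preserves this. One must also track carefully that the ``unused'' sets exclude exactly the non-\(U\) and non-\(Y\) ports already visited. This is where the counts \(t-b\) and \(u-a\) of still-available absorbing ports come from.
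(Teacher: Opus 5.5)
Your proposal is correct and follows the paper's proof essentially step for step: you expose the two reflection bijections along a single channel $x\in X\setminus Y$, use the counts $t-b$ and $u-a$ of absorbing ports that are still available, sum over rounds, apply linearity of expectation over the $t-a$ non-direct inputs, and check the degenerate cases the same way. The two points you leave informal are settled in the paper as you anticipate. Conditional uniformity follows because a partial bijection with $k$ prescribed pairs has $(d-k)!$ equally likely extensions, and the path history is determined by those pairs and the four fixed sets. The no-revisit claim should rest on $x$ lying outside the image $I_+\setminus X$ of $\alpha$, so that injectivity would propagate any repetition back to $x$; being a permutation alone would not rule out cycles.
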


The proof in Appendix~\ref{app:gluing} reveals uniform reflection
bijections as the path is followed. Each query samples an unused image,
which accounts for the denominators in \eqref{eq:survivalrecursion}.
Although all aperture sets are preserved, the expectation depends only on
their sizes and the two overlaps in \eqref{eq:overlaps}.

For comparison, completely randomizing the two interface matchings yields
the count-only expression
\begin{equation}\label{eq:global}
 F_N(t,u)=tu\sum_{k=0}^{N-\max(t,u)}
 \frac{(N-t)_k(N-u)_k}{(N)_k^2(N-k)},
\end{equation}
when \(tu>0\), and zero otherwise; \((m)_k\) is the falling
factorial. The two comparisons are different. For example,
\(N=2,t=u=1,a=b=0\) gives \(H_N=0\), whereas \(F_N=3/4\).

\subsection{The measured routing deficit}

For each sampled pair, compute its actual \(G\), its exact conditional
\(H=H_N(t,u,a,b)\), and \(F=F_N(t,u)\). Define
\begin{equation}\label{eq:deficits}
 D=H-G,\qquad E=H-F.
\end{equation}
The standard errors in Table~\ref{tab:gluing} are computed from the
paired values of \(D\) and \(E\), rather than from independent-error
propagation of their component means.

\begin{table}[htbp]
\centering\small
\caption{Boundary gluing. The last two columns give paired means with
one standard error. \(m\) is the number of independent slab pairs.}
\label{tab:gluing}
\begin{tabular}{rrrrrrr}
\toprule
\(p\)&\(W\)&\(m\)&\(\overline G\)&\(\overline H\)&\(\overline D\) (SE)&\(\overline E\) (SE)\\
\midrule
.50&64&512&21.39258&21.28829&$-.10428\ (.10020)$&$.01351\ (.02068)$\\
.50&128&256&42.84375&42.77041&$-.07334\ (.20822)$&$-.00613\ (.02771)$\\
.70&64&1024&2.92480&3.11240&$.18760\ (.02746)$&$.00181\ (.00241)$\\
.70&128&512&5.80469&5.93422&$.12954\ (.05471)$&$.00296\ (.00311)$\\
.72&64&1024&1.04688&1.22913&$.18225\ (.01863)$&$.00056\ (.00111)$\\
.72&128&512&1.85938&2.06788&$.20850\ (.03224)$&$.00026\ (.00123)$\\
\bottomrule
\end{tabular}
\end{table}

The \(p=.72,W=128\) deficit has a nominal normal \(95\%\) interval
\([.14531,.27169]\). At both measured widths the positive deficit is
resolved at \(p=.72\), whereas no nonzero deficit is resolved at
\(p=.5\). The fixed-set and count-only comparison means are close in
these ensembles, despite their possible disagreement for individual
terminal sets. This does not make the aperture sets independent.

Only two widths were studied for this intervention. In particular, the
data do not establish a bounded limiting deficit or an asymptotic power
law. The measured effect specifies which dependence must be retained in a
block comparison: internal reflection routing can affect transmission
even when every input and output aperture set is known.

\subsection{Nearby transmitting channels are suppressed}

Let \(\mathcal T\) be the set of incoming transmitting ports on one
face. The null model, conditional on \(|\mathcal T|=T\), is a uniformly
chosen \(T\)-element subset of the \(N\) physical ports. The transverse
port grid is a square torus with \(2N\) nearest-neighbor edges. If
\(J\) counts unordered transmitting nearest-neighbor pairs, then
\begin{equation}\label{eq:neighbor-null}
 \EE_{\mathrm{unif}}[J\mid T]=\frac{2T(T-1)}{N-1}.
\end{equation}
Partition the face into squares of even side \(b\) dividing \(W\).
Each square contains \(m_b=b^2/4\) ports. If \(c_j\) is its
transmitting-port count, then
\begin{equation}\label{eq:aperture-null}
 \EE_{\mathrm{unif}}\left[\sum_jc_j(c_j-1)\,\middle|\,T\right]
 =\frac{T(T-1)(m_b-1)}{N-1}.
\end{equation}
Both formulas follow by summing the probability
\(T(T-1)/(N(N-1))\) for two specified distinct ports to be selected.

We compare the empirical mean observed pair count with the empirical mean
of its conditional null expectation. The two faces are averaged within
each environment before computing a standard error. At \(p=.7,W=128\),
the nearest-neighbor ratio is \(.5663\pm.0452\), using \(2{,}176\)
independent slabs. At \(W=64\) it is \(.6573\pm.0327\), using
\(4{,}352\) slabs. At \(p=.5,W=128\) the ratio is
\(.9393\pm.0145\), using \(640\) slabs.

For \(p=.7,W=128\), the corresponding within-square ratios for
\(b=4,8,16,32,64\) are approximately
\(.651,.722,.820,.915,.970\). The suppression decreases with aperture
size, but these data do not establish a finite correlation length.
The pair statistics measure spatial dependence of the sets; the routing
experiment above holds precisely those sets fixed.

\begin{figure}[htbp]
\centering\includegraphics[width=\textwidth]{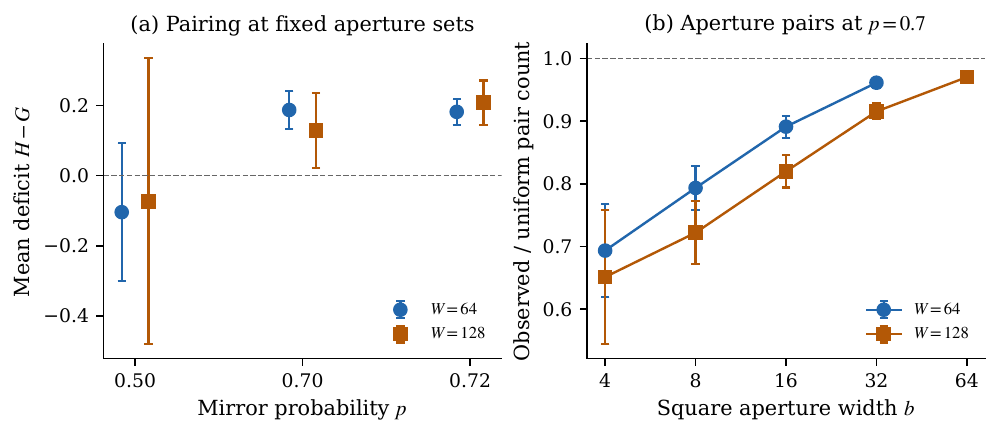}
\caption{Boundary diagnostics. Left: the paired routing deficit
\(H-G\), retaining all terminal sets. Right: ordered within-aperture
pair counts divided by the exact fixed-cardinality null at \(p=.7\).
Error bars are nominal \(95\%\) intervals across independent
environments or environment pairs. The horizontal line at one in the
right panel is the uniform-subset value.}
\label{fig:boundary}
\end{figure}

\section{Age-resolved quenched memory}\label{sec:memory}

\subsection{Charges, stopping, and the observable}

The parity of the vertex and its incoming axis form a twelve-state
space. Appendix~\ref{app:corrector} defines an explicit bounded
corrector \(h\), depending on \(p\in(0,1)\). If \(S_t\) is this
state before processing the vertex at time \(t\), its corrected
increment is
\begin{equation}\label{eq:charge}
 \xi_t=X_{t+1}-X_t+h(S_{t+1})-h(S_t)
       =k(S_t)\left(1-\frac{\omega_{X_t}}p\right),
 \qquad k(S_t)\in\{-1,1\}^3.
\end{equation}
The two incoming states at one vertex have opposite \(k\).
Consequently the corrected charges of its two passages cancel exactly.

Let \(M_n\) sum \(\xi_t\) over first visits with
\(t<n\wedge\tau\). Let \(R_n\) sum the original first-visit
charge over vertices whose second visit also occurs before
\(n\wedge\tau\). Then
\begin{equation}\label{eq:decomposition}
 Y_n:=X_{n\wedge\tau}-X_0+h(S_{n\wedge\tau})-h(S_0)=M_n-R_n.
\end{equation}
Only first visits reveal independent Bernoulli variables, so \(M\) is a
martingale for the exploration filtration. Its predictable trace bracket is
\begin{equation}\label{eq:bracket}
 Q_n=\tr\langle M\rangle_n=\frac{3(1-p)}p F_n,
\end{equation}
where \(F_n\) is the number of freshly processed vertices. No
martingale property is asserted for the selected sum \(R_n\).

For each second visit at time \(t\), let \(s<t\) be the first
visit. Define
\begin{equation}\label{eq:memory-tail}
 R_n^{\ge L}=\sum_{\substack{t<n\wedge\tau:\text{second visit}\\p(t-s)\ge L}}
 \xi_s,\qquad
 D_n^{\ge L}=\sum_{\substack{t<n\wedge\tau:\text{second visit}\\p(t-s)\ge L}}
 |\xi_s|^2.
\end{equation}
The difference \(\Ep|R_n^{\ge L}|^2-\Ep D_n^{\ge L}\) is the
sum of cross-charge inner products. Adaptation of the revisited set to the
revealed environment prevents replacing this difference by zero without
an argument.

\subsection{Measured tails of the signed remainder}

Each probability uses \(2{,}048\) independent paths, observed at
\(n=2^{16},2^{18},2^{20}\). The final-horizon closed counts are
\(0,0,1,5\), in increasing order of \(p\). Closed paths remain in
all averages. The campaign processed \(8{,}583{,}643{,}560\) steps.
Table~\ref{tab:memory} reports estimates of
\begin{equation}\label{eq:normalized-memory}
 B_p(n,L)=\frac{\Ep|R_n^{\ge L}|^2}{p^2\Ep Q_n}.
\end{equation}
These are ratios of ensemble means, not averages of pathwise ratios.

\begin{table}[htbp]
\centering\small
\caption{Normalized signed-square memory tails at \(n=2^{20}\).
Parentheses give one standard error across independent trajectories.}
\label{tab:memory}
\begin{tabular}{rrrr}
\toprule
\(p\)&\(L=0\)&\(L=32\)&\(L=128\)\\
\midrule
.025&$.048414\ (.001461)$&$.018347\ (.000824)$&$.008926\ (.000586)$\\
.050&$.053985\ (.001167)$&$.020077\ (.000492)$&$.009538\ (.000271)$\\
.100&$.061185\ (.001196)$&$.021529\ (.000419)$&$.010658\ (.000231)$\\
.200&$.080896\ (.001636)$&$.026569\ (.000539)$&$.012760\ (.000266)$\\
\bottomrule
\end{tabular}
\end{table}

At \(p=.1,L=128\), the diagonal version
\(\Ep D_n^{\ge L}/(p^2\Ep Q_n)\) is \(.01065798\), compared
with \(.01065837\) for the signed square. Their difference is
\(.00000039\), with standard error \(.00021658\). This agreement
does not extend to an exact independence identity. For example, at
\(p=.2,L=0\) the signed square exceeds its diagonal budget by
\(.004266\pm.001633\), approximately \(5.6\%\) of that budget.

The old-tail and full-tail values are not an additive allocation of
variance. The squared norm of the complete remainder also contains the
cross term between its young and old parts.

\subsection{Finite-horizon fits}

At horizon \(H=pn\), two visits with age \(\ell\) have a
placement factor proportional to \(H-\ell\). A stationary age
density proportional to \(\ell^{-1-\beta}\) would therefore give,
up to an amplitude,
\begin{equation}\label{eq:fit-shape}
 f_\beta(L,H)=L^{-\beta}-H^{-\beta}
 -\frac{\beta}{1-\beta}
    \left(H^{-\beta}-\frac{L^{1-\beta}}H\right),
 \qquad 0<L<H.
\end{equation}
For \(\beta=1/2\), this becomes
\begin{equation}\label{eq:half-shape}
 f_{1/2}(L,H)=L^{-1/2}-2H^{-1/2}+L^{1/2}/H.
\end{equation}
We fit an arbitrary amplitude times \eqref{eq:fit-shape} to the signed
square in \eqref{eq:normalized-memory}, by least squares of logarithms
at \(L=32,64,128,256,512,1024\). Applying the placement model to a
signed-square statistic is a modeling assumption: cross-charge
correlations need not have the same horizon dependence as counts.

\begin{table}[htbp]
\centering\small
\caption{Finite-horizon corrected tail exponents. Intervals use
\(500\) whole-trajectory bootstrap samples and quantify sampling
uncertainty within the stated fit model.}
\label{tab:exponents}
\begin{tabular}{rrrr}
\toprule
\(p\)&Estimate \(\widehat\beta\)&Bootstrap \(95\%\) interval&Fit starting at \(L=64\)\\
\midrule
.025&.5038&$[.4238,.5940]$&.5158\\
.050&.5000&$[.4681,.5295]$&.4969\\
.100&.4999&$[.4802,.5171]$&.5044\\
.200&.4940&$[.4741,.5115]$&.4872\\
\bottomrule
\end{tabular}
\end{table}

The intervals do not account for all finite-age corrections or uncertainty
in the fit model. The observed consistency with \(1/2\) motivates a
uniform memory estimate; it is not a proof of an asymptotic exponent.

\begin{figure}[htbp]
\centering\includegraphics[width=\textwidth]{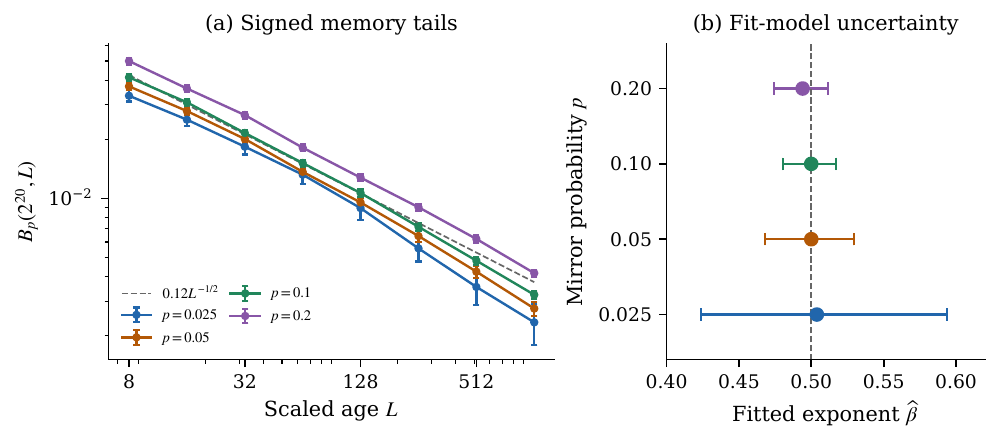}
\caption{Age-resolved memory at \(n=2^{20}\). Left: the normalized
squared norm of the signed old-charge sum; bars are nominal \(95\%\)
intervals and the dashed power law is a visual guide. Right: the
finite-horizon corrected exponents with whole-path bootstrap intervals.
The vertical line is \(\beta=1/2\).}
\label{fig:memory}
\end{figure}

\subsection{Displacement moments without survival conditioning}

Put \(Z_n=X_{n\wedge\tau}-X_0\). Since \(Z_n=0\) once a finite
cycle completes, Cauchy--Schwarz gives
\begin{equation}\label{eq:moment-ratio}
 \Pp(\tau>n)\ge
 \frac{(\Ep|Z_n|^2)^2}{\Ep|Z_n|^4}.
\end{equation}
The final-horizon plug-in ratios are \(.526,.559,.548,.546\) for the
four probabilities above, with standard errors between \(.009\) and
\(.012\). The ratios are not unbiased estimators and are not confidence
bounds on infinite survival. Appendix~\ref{app:criteria} states the
uniform moment estimates that would make \eqref{eq:moment-ratio} a
delocalization argument.

The resampled reference has covariance eigenvalue proportions \(4,4,1\),
as shown in Appendix~\ref{app:corrector}. A centered Gaussian with this
covariance has moment ratio \(27/49\), which is a useful comparison
scale. We make no unconditional Gaussian-limit claim for the quenched
model: finite root loops have positive probability and contribute an atom
at zero after diffusive rescaling.

\section{Annular continuation}\label{sec:annular}

Let $(X_n,a_n)$ denote the directed orbit started from
$(x_0,a_0)=((1,0,0),2)$, in doubled coordinates, and let
\[
 \tau=\inf\{n\geq1:(X_n,a_n)=(x_0,a_0)\},\qquad
 \sigma_R=\inf\{n\geq0:\|X_n-x_0\|_\infty\geq R\}.
\]
As usual, an infimum over the empty set is infinite. Define
\begin{equation}\label{eq:annular-definitions}
 A_R=\{\sigma_R<\tau\},\qquad
 \delta_p(R)=\Pp(A_{2R}^{\mathrm c}\mid A_R).
\end{equation}
Thus $\delta_p(R)$ is the probability that a root orbit closes before
reaching radius $2R$, conditional on having reached radius $R$.
This first-exit observable measures continuation across a spatial scale;
it differs from survival to a prescribed time.

We simulated $8{,}192$ roots through radius $512$ at each of
$p=0.4,0.6,0.7$, and three independent cohorts of $65{,}536$ roots
through radius $128$ at $p=0.7$. The three broad cohorts contain
$196{,}608$ roots. After removing their overlap with the extended runs,
the pooled $p=0.7$ sample contains $196{,}609$ distinct roots; the
campaign contains $212{,}993$ distinct root-environment samples in total.
Every run resolved either closure or its target-radius exit before the
cap of $2^{22}$ steps: there was no time censoring. In particular, the
radius-$512$ exit counts were respectively $8013$, $7007$, and $4378$
out of $8192$, giving empirical probabilities $0.97815$, $0.85535$, and $0.53442$.
These finite-radius observations do not give positive lower bounds on
the probability of an infinite orbit.

\begin{table}[t]
\centering
\caption{Conditional annular losses at $p=0.7$. Intervals are marginal,
two-sided $95\%$ Clopper--Pearson intervals. The eligible cohort contains
$196{,}609$ roots for $R\leq64$ and $8192$ roots for $R\geq128$;
the displayed denominators count those reaching the inner radius.}
\label{tab:annular-loss}
\begin{tabular}{@{}rrrr@{}}
\hline
$R$ & Closures / reached $R$ & $\widehat\delta_{0.7}(R)$ & $95\%$ interval\\
\hline
8   & $12797/126966$ & $0.100791$ & $[0.099140,0.102460]$\\
16  & $5895/114169$  & $0.051634$ & $[0.050357,0.052933]$\\
32  & $2281/108274$  & $0.021067$ & $[0.020220,0.021940]$\\
64  & $979/105993$   & $0.009236$ & $[0.008669,0.009831]$\\
128 & $12/4394$     & $0.002731$ & $[0.001412,0.004766]$\\
256 & $4/4382$      & $0.000913$ & $[0.000249,0.002336]$\\
\hline
\end{tabular}
\end{table}

Table~\ref{tab:annular-loss} and Figure~\ref{fig:annular} show a marked
decrease in annular loss. For $64\to128$, the three broad cohorts gave
$316/35316$, $340/35450$, and $323/35227$, independently reproducing a
loss near one percent. Only sixteen closures contribute to the last
two rows, so their apparent faster decay does not determine a tail
exponent. The measured decay is consistent with investigating a
summable continuation loss.\footnote{Brownian-loop descriptions already
suggest finite-loop tails compatible with inverse-radius corrections;
we do not claim the prediction of a new $R^{-1}$ law. See
\cite{LoopSoups}.}

\begin{figure}[t]
\centering
\includegraphics[width=0.78\textwidth]{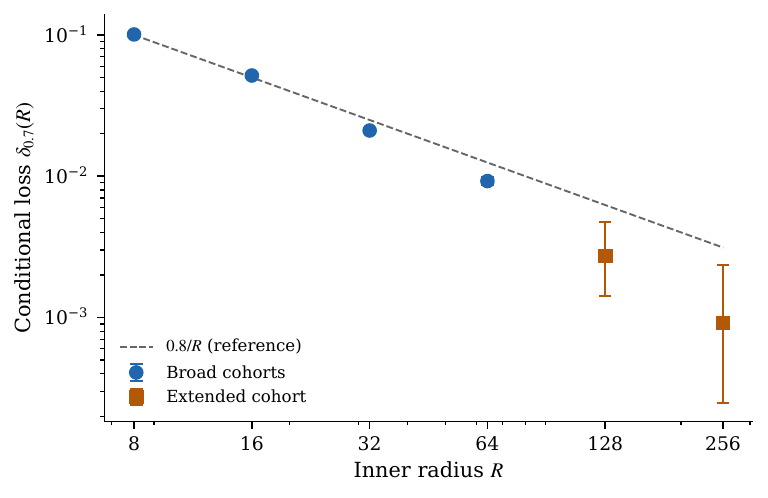}
\caption{Conditional annular loss at $p=0.7$, with marginal $95\%$
intervals. The reference curve $0.8/R$ is not a fitted or proved bound.
The cohort changes after $R=64$, as specified in
Table~\ref{tab:annular-loss}.}
\label{fig:annular}
\end{figure}

\subsection*{First-exit history and replication}
An exploratory split at $R=64$ suggested that slower first exits had
smaller subsequent closure risk. We fixed the cutoffs
$\sigma_{64}/64^2\leq1.01220703125$ and
$\sigma_{64}/64^2>2.408447265625$ before examining the two further cohorts.
The discovery counts were $97/8832$ for fast exits and $43/8829$ for slow
exits. The independent counts were instead $99/8695$ versus $83/9028$,
and $89/8781$ versus $77/8896$. The corresponding slow/fast risk ratios
were $0.807$ and $0.854$, with approximate $95\%$ intervals $[0.604,1.079]$ and
$[0.630,1.157]$. Thus the strong initial association did not replicate.
In the cohort recording exit coordinates, exploratory logistic controls
for squared Euclidean distance and squared displacement along the body
diagonal also left no resolved hit-time association. These tests do not
identify a protective mechanism from the explored history.

\subsection*{A prospective proof criterion}
For $R_k=2^kR_0$, nestedness gives the exact identity
\begin{equation}\label{eq:annular-product}
 \Pp(A_{R_n})
 =\Pp(A_{R_0})\prod_{k=0}^{n-1}
       \bigl(1-\delta_p(R_k)\bigr).
\end{equation}
Consequently, a positive initial exit probability together with
$\delta_p(R_k)<1$ and
$\sum_{k\geq0}\delta_p(R_k)<\infty$ implies positive probability of an
infinite orbit; see Appendix~\ref{app:criteria}. The numerical results
motivate an estimate of this kind for the probability of completing the
\emph{root cycle} before the next radius doubling, averaged over the
first-exit history. They establish neither this all-scale estimate nor
a bound uniform over individual histories.

\section{Computational methods and verification}\label{sec:methods}

\subsection{Environment sampling and directed-state checks}

Infinite-lattice trajectories reveal a Bernoulli variable only at a
first visit and retain it for later passages. An injective integer
encoding stores vertices, their mirror values, and the incoming states
already used. Coordinates are checked against the encoding range.
The exploration stops only on return of the complete root directed
state. A repeated non-root directed state, an inactive arrival axis,
or a third processing of one vertex before closure is an error.

The slab routines follow all boundary inputs through a single frozen
environment. They verify that the resulting boundary map is a
permutation and that the directional transmission counts agree.
Periodic transverse coordinates are used only for the slab experiments.
The root-memory and root-annular experiments have unwrapped coordinates.

Each environment has a recorded seed. A trajectory is the statistical
unit for all memory observables; a slab or slab pair is the unit for
boundary observables. Original and refined batches were pooled only
after checking sample identities. Overlapping root seeds in two annular
batches were counted once; their common initial paths supplied an
additional replay check. The stated sample sizes are the completed
predetermined counts, with no outcome-dependent stopping of a batch.

\subsection{Monte Carlo errors and fits}

For independent paired measurements \(d_1,\ldots,d_m\), the estimated
standard error of their mean is the sample standard deviation divided
by \(\sqrt m\). This convention is used for the two deficits in
\eqref{eq:deficits}. It retains the correlation between physical and
comparison transmissions in the same environment pair.

For a ratio of means \(\widehat r=\overline A/\overline B\), we use
the empirical influence values
\begin{equation}\label{eq:ratio-se}
 \frac{A_i-\widehat r B_i}{\overline B}.
\end{equation}
Their sample standard deviation divided by \(\sqrt m\) gives the
delta-method standard error. Boundary-face measurements are first
averaged within each environment. For memory fits, bootstrap resampling
selects whole trajectories, retaining their joint observations at
different ages. Supplementary percentile intervals for the memory tails
use \(1{,}000\) resamples; fitted-exponent intervals use \(500\).

Finite-size scaling studies of the Anderson transition explicitly test
corrections to scaling and the stability of fits under changes of the
data range \cite{SlevinOhtsuki1999,SlevinOhtsuki2014,Rodriguez2011}.
Here the change of the minimum fitted age in Table~\ref{tab:exponents}
is a limited sensitivity check. It does not replace a systematic
analysis of correction terms or make the bootstrap intervals account
for uncertainty in the scaling ansatz.

The annular intervals are exact binomial intervals for the number of
failures among paths that reached the inner radius. Measurements at
successive radii are dependent and are not treated as independent
observations in a fitted scaling law. Zero observed failures would give
a nonzero upper confidence limit. No infinite-scale probability is
estimated by simply replacing an unobserved tail with zero.

\subsection{Exact controls and independent checks}

The endpoint controls give straight trajectories at \(p=0\) and
elementary hexagons at \(p=1\). A separate implementation matched
\(32\) memory trajectories at all \(96\) common snapshots, including
their fresh and repeated counts and the vectors \(M,R,X\). In the
full memory campaign the maximum absolute residual in
\eqref{eq:decomposition} was \(4.55\times10^{-13}\).
The sums of the recorded age-bin vectors and counts reproduce the
unbinned observables. A \(256\)-trajectory comparison of the annular
simulator reproduced closure, step count, range, and revisit counts
from a separately implemented trajectory routine.

The formula \eqref{eq:fixedsets} was checked against exact enumeration
of \(5{,}060\) pairs of reflection bijections in \(145\) feasible
size/overlap cases with \(2\le N\le5\). These finite checks support
the implementation; the proof in Appendix~\ref{app:gluing} establishes
the general formula. In addition, \(24\) independently sampled
rematchings per physical slab pair agree with the exact conditional
mean within their Monte Carlo uncertainty. The reported primary deficit
uses the exact mean and does not include this additional rematching noise.

The periodic obstruction was checked both in contact coordinates and
directly in physical coordinates. A physical period-four box has
\(96\) directed states; they form two cycles of length \(48\), each
with zero unwrapped displacement. Appendix~\ref{app:geometry} supplies
the complete finite construction and the connectivity argument, so its
validity does not depend on a simulation extrapolation.

\subsection{Reproducibility and scope}

The numerical supplement contains the model implementations, recorded
seeds, per-environment and per-trajectory observations, analysis scripts,
and the exact finite checks. The manuscript tables are fixed summaries
of these observations. The accompanying \LaTeX{} package includes all
figures and uses no external numerical input files when typeset.

Monte Carlo intervals describe sampling uncertainty in the specified
finite experiments. They are not deterministic error enclosures and
do not include every finite-size or modeling error. The boundary
comparison preserves aperture sets but changes the pairing law.
The memory exponent fits assume the finite-horizon form
\eqref{eq:fit-shape}. These distinctions matter when translating
numerical observations into a mathematical conjecture.

\section{Discussion}\label{sec:discussion}

The results suggest three analytical questions. First, compare physical
reflection with the conditional law retaining its boundary sets. One
possible estimate is
\begin{equation}\label{eq:gluing-target}
 \EE\!\left[H_N(t,u,a,b)-G\right]\le C_p
\end{equation}
for independent slabs, with \(C_p\) independent of width. Two tested
widths do not establish an asymptotic bound. A proof would also need a
supply of transmitting paths, a multiscale construction, and a passage
to infinite-volume escape. The exact \(H_N\) avoids assuming uniform
aperture sets.

Second, the memory experiment motivates the estimate
\begin{equation}\label{eq:memory-target}
 \Ep|R_n^{\ge L}|^2\le
 C p^2(1+L)^{-1/2}\Ep Q_n
\end{equation}
uniformly over a small-density regime and relevant \(n,L\). Separating
the diagonal collision budget from cross-charge inner products may help.
Exact pair cancellation does not control the selection of charges for
cancellation. Moreover, \eqref{eq:memory-target} alone does not prove
delocalization: the stopped fresh-vertex count in the bracket also needs
a growth estimate. Uniform second- and fourth-moment bounds for the
stopped displacement give one sufficient completion
(Appendix~\ref{app:criteria}).

Third, summable annular losses would yield positive escape through an
infinite product. The observed decrease does not control its remaining
tail. The failed replication of the first-exit-time association also
cautions against choosing a history descriptor from one exploratory
cohort.

The contact representation gives a geometric constraint. Straight
vertices are independent open cubic-lattice contacts between all-turn
hexagons. By Theorem~\ref{thm:periodic-obstruction}, a connected contact
graph filling space can support only finite trajectories: contacts can
split as well as merge routing cycles. An iid proof must use more than
contact connectivity or component density.

This study establishes neither a critical probability nor a delocalized
phase. Its boundary comparison, signed-memory diagnostic, and exact
structural constraints identify mechanisms for further analytical work.
Larger widths and time horizons can test their proposed uniformity.
The established Anderson-transition literature suggests examining
several size sequences and distributional observables
\cite{MacKinnonKramer1983,SlevinMarkosOhtsuki2003}; for the present
model, these should be coupled to the routing and memory quantities
that enter prospective proofs.

\section*{Acknowledgments}
The authors used OpenAI's GPT models to assist with mathematical
exploration, the development and review of proofs, the design and
implementation of numerical experiments, drafting and revising the
manuscript, and checking bibliographic information. The authors take full
responsibility for all mathematical statements, numerical results,
proofs, and the final content of this paper.

\clearpage
\appendix

\section{Exact boundary comparison}\label{app:gluing}

We prove Theorem~\ref{thm:fixedsets} by exposing the two reflection
bijections along one transmitted input.  The argument concerns finite
scattering maps and requires no assumption on their realization by a local
mirror configuration.

Identify the two directed interface channel sets with disjoint sets
$\Omega_+$ and $\Omega_-$, each of cardinality $N$.  The former carries
channels from the left slab to the right slab, and the latter carries
channels in the opposite direction.  Write
\begin{itemize}
\item $I\subseteq\Omega_+$ for the images of left-to-right transmission
through the left slab;
\item $T\subseteq\Omega_+$ for the inputs transmitted to the external
right boundary by the right slab;
\item $E\subseteq\Omega_-$ for the inputs transmitted to the external
left boundary by the left slab;
\item $J\subseteq\Omega_-$ for the images of right-to-left transmission
through the right slab.
\end{itemize}
These sets are \((I,T,E,J)=(X,Y,U,Z)\) in the notation of
Section~\ref{sec:gluing}, with \(\Omega_\pm=I_\pm\).
The permutation property of each scattering map implies
\[
 |I|=|E|=t,\qquad |T|=|J|=u.
\]
Set
\[
 a=|I\cap T|,\qquad b=|E\cap J|,
 \qquad m_+=N-t-u+a,\qquad m_-=N-t-u+b.
\]
In particular,
\[
 \max\{0,t+u-N\}\le a,b\le\min\{t,u\}.
\]
Thus $m_+=|\Omega_+\setminus(I\cup T)|$ and
$m_-=|\Omega_-\setminus(E\cup J)|$.

Under the intervention of Theorem~\ref{thm:fixedsets}, the two reflection
maps are independent uniform bijections
\begin{equation}\label{eq:reflectionmaps}
 B:\Omega_+\setminus T\longrightarrow\Omega_-\setminus J,
 \qquad
 A:\Omega_-\setminus E\longrightarrow\Omega_+\setminus I.
\end{equation}
The other two maps in each slab are also rematched within their prescribed
blocks.  Their detailed values do not affect the total number of paths
transmitted through the glued sample: every channel in $I$ is entered
exactly once by the $t$ external inputs that transmit through the left
slab, and reaching any channel in $T$ is successful transmission.
Consequently it is enough to study the maps in~\eqref{eq:reflectionmaps}.

A path starting at $x_0\in I$ succeeds immediately if $x_0\in T$.
Otherwise it follows $B$ to a channel $y_0$.  If $y_0\in E$, it exits
at the external left boundary and fails.  If $y_0\notin E$, it follows
$A$ to a channel $x_1$.  The same rule is then repeated.  For any fixed
bijections $A,B$, a path started in $I$ cannot become trapped in an
interface cycle.  Indeed, $x_0$ is outside the image of $A$.  A repeated
plus channel other than $x_0$ would, by injectivity of $A$ and then of
$B$, force a repetition of its predecessor.  Iterating backwards would
force a later visit to $x_0$, which is impossible.  A repeated minus
channel similarly forces a repeated plus channel.  All channels visited
before absorption are therefore distinct, and the path terminates after
finitely many queries.

\begin{proof}[Proof of Theorem~\ref{thm:fixedsets}]
There is nothing to transmit if $t=0$ or $u=0$, giving
$H_N(t,u,a,b)=0$.  If $t=N$, then $I=\Omega_+$ and $E=\Omega_-$.  The $u$ inputs in $T$ succeed directly,
whereas all other inputs exit to the left after their first reflection.
Thus $H_N=u$.  If $u=N$, every channel in $I$ succeeds directly, and
$H_N=t$.  These are the endpoint values stated in the theorem.

Suppose henceforth that $0<t,u<N$.  If $a=t$, all $t$ injected
channels succeed directly, and the formula holds because its summation
term is multiplied by $t-a=0$. If $a=u<t$, then $T\subset I$;
the image of $A$ excludes $I$, so only the $a$ direct successes are
possible. We may therefore assume $a<\min(t,u)$ and fix
one of the $t-a$ channels $x_0\in I\setminus T$.  A completed unsuccessful round means that the
path has queried $B$ and then $A$, without yet reaching either external
boundary.  Let $s_k$ be the probability that $k$ such rounds have been
completed, starting with $s_0=1$.

Condition on any particular history of $k$ completed unsuccessful
rounds having positive probability.  This history has exposed exactly
$k$ distinct pairs of $B$ and $k$ distinct pairs of $A$.  The $k$ exposed
images of $B$ all lie in
$\Omega_-\setminus(E\cup J)$, whereas the $k$ exposed images of $A$
all lie in $\Omega_+\setminus(I\cup T)$.  Conditional on these pairs,
each unexposed bijection remains uniform on its remaining domain and
image.  This assertion follows directly by counting extensions: a
partial bijection with $k$ prescribed pairs has $(d-k)!$ extensions
when its full domain has cardinality $d$.  Conditioning additionally
on the given path history imposes no further restriction on the
unexposed pairs, since the history is determined by the prescribed
pairs and the four fixed sets.

The next plus channel has not previously been queried.  Its image
under $B$ is therefore uniform among $N-u-k$ unused minus channels.
Exactly $m_--k$ of these avoid $E$.  Conditional on avoiding $E$, the
ensuing query of $A$ is also at an unqueried channel, and its image is
uniform among $N-t-k$ unused plus channels.  Precisely $u-a$ of those
images lie in $T$: no such image has been used in an unsuccessful
round.  The remaining $m_+-k$ images avoid $T$.  It follows that the
probability of success on this round is
\begin{equation}\label{eq:successround}
 s_k\frac{m_--k}{N-u-k}\frac{u-a}{N-t-k},
\end{equation}
and the probability of completing one further unsuccessful round is
\[
 s_{k+1}
 =s_k\frac{m_--k}{N-u-k}\frac{m_+-k}{N-t-k}.
\]
This is exactly~\eqref{eq:survivalrecursion}.  The calculation applies
to every positive-probability history with $k$ rounds, so it also gives
the stated unconditional recursion.  Once a continuation numerator
vanishes, all subsequent continuation probabilities are set to zero.

A success counted by~\eqref{eq:successround} requires both a new query
of $B$ and a new query of $A$.  Hence only
\[
 0\le k< K,\qquad K=\min\{N-t,N-u\},
\]
can contribute.  This remains true if a path can make one final query
of $B$ after all images of $A$ have been used: that last query can only
exit to the left, and cannot contribute a success.  Summing
\eqref{eq:successround} gives the success probability for any fixed
$x_0\in I\setminus T$.  There are $a$ direct successes and $t-a$
such remaining inputs.  Linearity of expectation, which does not
require different input paths to be independent, gives
\[
 H_N(t,u,a,b)
 =a+(t-a)\sum_{k=0}^{K-1}
 s_k\frac{N-t-u+b-k}{N-u-k}\frac{u-a}{N-t-k}.
\]
This is~\eqref{eq:fixedsets} and completes the proof.
\end{proof}

The proof shows that the conditional expectation depends on the four
fixed aperture sets only through $N,t,u,a,b$.  It does not say that the
physical reflection maps, conditional on those sets, are uniform.
Nor need the rematched boundary maps be realizable by mirrors inside
the slabs.  These distinctions are essential when using the exact
comparison to interpret the numerical experiment.

For completeness, the count-only comparison in \eqref{eq:global}
follows from the same exposure argument. With independent uniform
permutations of the two interface directions, survival through $k$
unsuccessful round trips has probability
\[
 \frac{(N-t)_k(N-u)_k}{(N)_k^2}.
\]
The next forward query is uniform among $N-k$ unused ports, of which
$u$ transmit through the right slab. Multiplying by $u/(N-k)$,
summing over $0\le k\le N-\max(t,u)$, and then summing over the $t$
injected channels gives \eqref{eq:global}. No independence between
different injected trajectories is needed.

\section{Sufficient conditions for infinite trajectories}\label{app:criteria}

Fix a directed root state, and write $X_n$ for the position of its
trajectory.  Let
\[
 \tau=\inf\{n\ge1:\text{the directed state at time $n$ equals the root
 state}\},
\]
with $\inf\varnothing=\infty$.  The directed state includes both the
vertex and the incoming axis.  In particular, $X_\tau=X_0$ whenever
$\tau<\infty$.  Define the displacement frozen at closure by
\[
 Z_n=X_{n\wedge\tau}-X_0.
\]
All expectations in this section include closed trajectories.  They
are not conditioned on survival.

\begin{proposition}\label{prop:stoppedmoments}
For every $n$ with $\mathbb E|Z_n|^4>0$,
\begin{equation}\label{eq:stoppedmomentbound}
 \mathbb P(\tau>n)
 \ge \frac{(\mathbb E|Z_n|^2)^2}{\mathbb E|Z_n|^4}.
\end{equation}
In particular, if constants $c,C>0$ satisfy
\[
 \mathbb E|Z_n|^2\ge cn,
 \qquad
 \mathbb E|Z_n|^4\le Cn^2
\]
for every sufficiently large $n$, then
$\mathbb P(\tau=\infty)\ge c^2/C>0$.
\end{proposition}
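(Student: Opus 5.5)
The plan is a Paley--Zygmund type argument based on one structural fact: the frozen displacement vanishes on the event $\{\tau\le n\}$. Indeed, if $\tau\le n$ then $n\wedge\tau=\tau$, and $X_\tau=X_0$ because the root directed state, and in particular the root vertex, recurs at time $\tau$. Hence $Z_n=Z_n\ind_{\{\tau>n\}}$ pointwise. Applying Cauchy--Schwarz to $|Z_n|^2=|Z_n|^2\ind_{\{\tau>n\}}$ gives
\[
 \mathbb E|Z_n|^2\le\bigl(\mathbb E|Z_n|^4\bigr)^{1/2}\,\mathbb P(\tau>n)^{1/2}.
\]
Squaring and dividing by $\mathbb E|Z_n|^4>0$ yields \eqref{eq:stoppedmomentbound}. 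No integrability issue arises, since $|Z_n|\le n$ deterministically ($X$ moves by unit steps), so all moments are finite.

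For the second assertion, insert the hypotheses for large $n$. Then $\mathbb E|Z_n|^4>0$ follows from $\mathbb E|Z_n|^2\ge cn>0$, and
\[
 \mathbb P(\tau>n)\ge \frac{c^2n^2}{Cn^2}=c^2/C .
\]
The events $\{\tau>n\}$ decrease in $n$ to $\{\tau=\infty\}$, so continuity of probability from above gives $\mathbb P(\tau=\infty)=\lim_n\mathbb P(\tau>n)\ge c^2/C$.

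There is no serious obstacle. The only point needing care is the vanishing of $Z_n$ on closure, which uses that $\tau$ is the return time of the full directed state, so that the position returns to $X_0$. It is this vanishing, rather than any conditioning on survival, that makes the unconditioned moments control the survival probability.
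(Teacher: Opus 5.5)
Your proposal is correct and matches the paper's proof: $Z_n$ vanishes on $\{\tau\le n\}$, Cauchy--Schwarz is applied and the result squared, and continuity from above along the decreasing events $\{\tau>n\}$ gives the conclusion. Your remark that $|Z_n|\le n$ guarantees finite moments is a small detail the paper leaves implicit.
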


\begin{proof}
On $\{\tau\le n\}$, the stopped displacement is zero.  Therefore
Cauchy--Schwarz gives
\[
 \mathbb E|Z_n|^2
 =\mathbb E\bigl[|Z_n|^2\mathbf1_{\{\tau>n\}}\bigr]
 \le (\mathbb E|Z_n|^4)^{1/2}
       \mathbb P(\tau>n)^{1/2}.
\]
Squaring proves~\eqref{eq:stoppedmomentbound}.  Under the additional
hypotheses, its right-hand side is at least $c^2/C$ for all sufficiently
large $n$.  The events $\{\tau>n\}$ decrease to $\{\tau=\infty\}$,
so continuity of probability from above proves the last assertion.
\end{proof}

When $\mathbb E|Z_n|^4=0$, both moments vanish and the ratio may be
assigned the value zero.  More generally, a positive lower limit of
the ratios in~\eqref{eq:stoppedmomentbound} is sufficient.  The same
argument applies to a corrected displacement
\[
 Z_n+h(S_{n\wedge\tau})-h(S_0),
\]
provided its second and fourth moments satisfy the corresponding
bounds: its value also vanishes at a completed root cycle.  A
second-moment estimate for the fresh-visit martingale alone does not
supply either of the hypotheses for the actual displacement.

We next record the radius formulation.  Put
\[
 \sigma_R=\inf\{n\ge0:\lVert X_n-X_0\rVert_\infty\ge R\},
 \qquad A_R=\{\sigma_R<\tau\}.
\]
The channel graph has finitely many directed states in any bounded
box.  For each fixed environment, the routing map on all directed
states is a bijection.  Consequently its root orbit either returns
to the root or visits distinct directed states forever.  In the
second case it leaves every bounded box.

\begin{proposition}\label{prop:annularcriterion}
Fix $R_0>0$, write $R_j=2^jR_0$, and suppose
$\mathbb P(A_{R_j})>0$ for every $j\ge0$.  Define
\[
 \delta_{R_j}
 =1-\mathbb P(A_{R_{j+1}}\mid A_{R_j}).
\]
Then
\begin{equation}\label{eq:annularproduct}
 \mathbb P(\tau=\infty)
 =\mathbb P(A_{R_0})\prod_{j=0}^{\infty}(1-\delta_{R_j}).
\end{equation}
If $\delta_{R_j}<1$ for every $j$ and
$\sum_{j\ge0}\delta_{R_j}<\infty$, the probability in
\eqref{eq:annularproduct} is positive.  In particular, an estimate
\[
 \delta_R\le C R^{-\alpha},\qquad \alpha>0,
\]
at every sufficiently large dyadic radius suffices, provided the
trajectory has positive probability of reaching each finite radius.
\end{proposition}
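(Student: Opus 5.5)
The plan is to obtain \eqref{eq:annularproduct} as the limit of the finite telescoping identity \eqref{eq:annular-product}, and then to show that summability of the annular losses keeps the infinite product positive.

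\textbf{Nestedness of the events.} First show that the events \(A_{R_j}\) decrease in \(j\). If \(\sigma_{R_{j+1}}<\tau\), then the orbit reaches sup-distance \(R_{j+1}\) before closure. Each step changes the position by one unit, so the orbit passes radius \(R_j\) no later, giving \(\sigma_{R_j}\le\sigma_{R_{j+1}}<\tau\). Monotonicity of \(\sigma_R\) in \(R\) in fact gives this directly, without using the unit step size.

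\textbf{Identification of the intersection.} Next show that \(\bigcap_j A_{R_j}=\{\tau=\infty\}\).
\begin{itemize}
\item If \(\tau=\infty\), the remark preceding the proposition applies. The root orbit visits distinct directed states forever, and a bounded box contains only finitely many such states. Hence the orbit leaves every box, so every \(\sigma_{R_j}\) is finite and satisfies \(\sigma_{R_j}<\infty=\tau\).
\item Conversely, if \(\tau<\infty\), the orbit before closure is a finite set of positions. It therefore stays within some bounded sup-radius, so \(\sigma_{R_j}=\infty>\tau\) for large \(j\).
\end{itemize}
This step, namely that a non-closing orbit must escape every box, is the only genuinely model-specific input. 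It uses bijectivity of the routing map together with local finiteness, both already recorded above.

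\textbf{Telescoping and passage to the limit.} Since \(\mathbb P(A_{R_j})>0\), the conditional probabilities are well defined. Nestedness gives
\[
\mathbb P(A_{R_n})=\mathbb P(A_{R_0})\prod_{j=0}^{n-1}(1-\delta_{R_j}),
\]
which is \eqref{eq:annular-product}. Continuity from above along the decreasing events \(A_{R_n}\) then yields \eqref{eq:annularproduct}, with the infinite product defined as the monotone limit of its partial products.

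\textbf{Positivity.} Assume every \(\delta_{R_j}<1\) and \(\sum_j\delta_{R_j}<\infty\). Choose \(j_0\) with \(\delta_{R_j}\le 1/2\) for all \(j\ge j_0\). For \(x\in[0,1/2]\) we have \(\log(1-x)\ge -2x\), so the tail product satisfies
\[
\prod_{j\ge j_0}(1-\delta_{R_j})\ge \exp\Bigl(-2\sum_{j\ge j_0}\delta_{R_j}\Bigr)>0.
\]
The finitely many earlier factors are positive because each \(\delta_{R_j}<1\). Moreover, \(\mathbb P(A_{R_0})>0\) by hypothesis, so the whole right-hand side of \eqref{eq:annularproduct} is positive.

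\textbf{The power-law corollary.} Suppose \(\delta_R\le CR^{-\alpha}\) for large dyadic radii. Then \(\sum_j\delta_{R_j}\le C R_0^{-\alpha}\sum_j 2^{-j\alpha}<\infty\).

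The requirement \(\delta_{R_j}<1\) follows from positivity of each \(\mathbb P(A_{R_{j+1}})\). Indeed, \(\mathbb P(A_{R_{j+1}}\mid A_{R_j})=\mathbb P(A_{R_{j+1}})/\mathbb P(A_{R_j})>0\), because \(A_{R_{j+1}}\subseteq A_{R_j}\). This is exactly the proviso that the trajectory has positive probability of reaching each finite radius.

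I expect no serious obstacle. The points needing care are the identification of \(\bigcap_j A_{R_j}\) with \(\{\tau=\infty\}\), and the bookkeeping that finitely many factors below one, but positive, do not spoil positivity of the product.
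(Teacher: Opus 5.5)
Your proof is correct and follows the paper's argument essentially step for step. Both use nestedness with the telescoping identity, the identification of $\bigcap_j A_{R_j}$ with $\{\tau=\infty\}$ via bijectivity and local finiteness, continuity from above, and the bound $\log(1-x)\ge -2x$ on the tail product. Your remark that $\delta_{R_j}<1$ already follows from the standing hypothesis $\mathbb P(A_{R_j})>0$ is a correct detail the paper leaves implicit. In the power-law step the bound $\delta_R\le CR^{-\alpha}$ only holds for large $j$, so you should bound the finitely many earlier terms separately, each by one.
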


\begin{proof}
The events $A_{R_j}$ are nested.  The definition of conditional
probability therefore gives, for every $m\ge1$,
\[
 \mathbb P(A_{R_m})
 =\mathbb P(A_{R_0})\prod_{j=0}^{m-1}(1-\delta_{R_j}).
\]
The bijectivity observation above shows that
$\bigcap_{j\ge0}A_{R_j}=\{\tau=\infty\}$: an infinite orbit leaves
every finite box, whereas a finite root cycle has bounded range.
Taking $m\to\infty$ proves~\eqref{eq:annularproduct}.

If the losses are summable, then $\delta_{R_j}\le1/2$ for all
sufficiently large $j$.  For $0\le x\le1/2$,
$\log(1-x)\ge-2x$.  The logarithm of the tail product is thus bounded
below by a finite number.  Each factor in the finite initial product
is positive, proving positivity of the full product.  Finally,
$\sum_j C(2^jR_0)^{-\alpha}$ is finite for $\alpha>0$.
\end{proof}

For this model with $0\le p<1$, reaching any prescribed finite radius
has positive probability: prescribing straight pairings at the
finitely many vertices of an initial directed line segment forces
the required excursion.  Thus the reachability hypothesis of
Proposition~\ref{prop:annularcriterion} causes no additional issue in
the parameter range considered here.  If the power bound holds from
$R_0$ onwards and $CR_0^{-\alpha}\le1/2$, the proof also gives the
explicit estimate
\[
 \mathbb P(\tau=\infty)
 \ge \mathbb P(A_{R_0})
 \exp\left\{-\frac{2CR_0^{-\alpha}}{1-2^{-\alpha}}\right\}.
\]

The losses $\delta_R$ average over the random revealed history at the
first exit from the smaller box.  The criterion requires no estimate
uniform over all possible revealed histories.  On the other hand,
observing decreasing losses at finitely many radii does not establish
the summable bound.  Likewise, finite-time empirical moment ratios
do not establish the uniform moment hypotheses of
Proposition~\ref{prop:stoppedmoments}.  The two propositions specify
analytical conclusions that would suffice; they are not consequences
of the simulations alone.

\section{The corrector and the revisit decomposition}\label{app:corrector}

We give the finite-state calculation underlying the memory observables. All
identities in this appendix concern the cyclic orientation specified in
Section~\ref{sec:model}. We use column vectors, and subscripts on coordinates
are read cyclically modulo three. Let
\[
 \mathcal S=\{(q,a):q\in\{0,1\}^3,\ q_{a+1}\ne q_{a+2},\ a\in\{1,2,3\}\}.
\]
There are twelve states. The parity vectors $(0,0,0)$ and $(1,1,1)$ do not
occur, and each of the other six parity vectors has two active axes. For
$s=(q,a)\in\mathcal S$, write $a_0=a$, and let $a_1$ be the other active axis
at $q$. Define the state maps and their pullback operators by
\begin{equation}\label{eq:corrector-state-maps}
 \Phi_b(q,a)=(q+e_{a_b}\pmod 2,a_b),\qquad
 (P_b f)(s)=f(\Phi_b(s)),\qquad b\in\{0,1\}.
\end{equation}
Here $b=0$ denotes a straight vertex and $b=1$ a turning vertex. The incoming
velocity is
\[
 V(q,a)=(-1)^{q_{a+1}}e_a.
\]
If the scatterer at the current vertex has value $b$, then the displacement
of the next step is $V(\Phi_b(s))$. Indeed, moving along axis $a_b$ does not
change the transverse coordinate determining its directed sign.

Put
\begin{equation}\label{eq:corrector-definitions}
 U=P_1V,\qquad W=P_1^2V,\qquad
 C=\frac{4I-J}{2},\qquad
 k=V-U-W,\qquad h=\frac1p CV+W-2V,
\end{equation}
where $I$ is the $3\times3$ identity matrix and $J$ has every entry equal to
one. In particular, $h$ is a bounded function on $\mathcal S$ for every
fixed $p\in(0,1)$.

\begin{proposition}\label{prop:corrector}
For every $s\in\mathcal S$ and $b\in\{0,1\}$, the corrected increment is
\begin{equation}\label{eq:corrector-increment}
 \xi(s,b):=V(\Phi_b(s))+h(\Phi_b(s))-h(s)
       =k(s)\left(1-\frac bp\right).
\end{equation}
Every coordinate of $k(s)$ is $1$ or $-1$. At a fixed vertex the two possible
incoming states have opposite values of $k$. Consequently, if $B$ is an
independent Bernoulli random variable with parameter $p$, then
\begin{equation}\label{eq:corrector-centered}
 \mathbb E[\xi(s,B)]=0,\qquad
 \mathbb E[\xi(s,B)\xi(s,B)^{\mathsf T}]
   =\frac{1-p}{p}\,k(s)k(s)^{\mathsf T}.
\end{equation}
Two traversals of the same vertex through different incoming axes have
exactly opposite corrected increments when they use the same scatterer.
\end{proposition}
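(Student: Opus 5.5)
The plan is to reduce the proposition to a few finite identities on the twelve-state space \(\mathcal S\). Each identity can then be checked either by the cyclic symmetry of the orientation or by direct enumeration. Substituting \(h=\frac1pCV+W-2V\) into \eqref{eq:corrector-increment} and separating the coefficients of \(p^{-1}\) and \(p^0\) splits the claim into four identities, two for \(b=0\) and two for \(b=1\).

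For \(b=0\), the \(p^{-1}\) part requires \(C(P_0V-V)=0\). This holds because \(P_0V=V\): moving straight along axis \(a\) changes only \(q_a\), whereas the sign of \(V(q,a)\) depends on \(q_{a+1}\). Since \(P_0V=V\), the \(p^0\) part reduces to \(V+P_0W-W=V-U-W\). This is the identity \(P_0P_1^2V=-P_1V\): doing a straight step and then two turns gives the reverse of a single turn.

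For \(b=1\), the \(p^{-1}\) part reads \(C(U-V)=-k=U+W-V\). The \(p^0\) part reads \(U+P_1W-W-2(U-V)=V-U-W\), which is equivalent to \(P_1^3V=-V\). This last identity is the half-hexagon statement: at \(p=1\) the trajectory is the six-step hexagon, and its step three steps later is antiparallel.

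I will verify these geometric identities in order, starting from \(P_0V=V\). The identities \(P_1^3V=-V\) and \(P_0P_1^2V=-P_1V\) should follow by tracking \(q\) modulo \(2\) through the successive axes \(a,a_1,\dots\). The cyclic symmetry \(a\mapsto a+1\), together with the parity translations, reduces the check to one or two representative states. If this bookkeeping becomes messy, I will simply tabulate all twelve states.

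The step I expect to be the main obstacle is the linear identity \(C(U-V)=U+W-V\). I expect that \(V\), \(U\), \(W\) are three signed unit vectors along the three distinct axes, and that \(U\) and \(W\) lie along consecutive turning axes. In that case \(k=V-U-W\) has all entries \(\pm1\), which would also give the claim \(k(s)\in\{-1,1\}^3\). The identity then becomes \(2(U-V)-\tfrac12J(U-V)=U+W-V\), that is \(U-V-W=\tfrac12 J(U-V)\). Checking it requires the sign relations between \(V\), \(U\) and \(W\), namely the orientation of the hexagon, which I will read off from \eqref{eq:orientation} for a representative state. I will then propagate the result by symmetry: the cyclic shift commutes with \(J\) and \(C\), and parity translations flip signs consistently.

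For the opposite-sign statement at a fixed vertex \(q\), I will compare \(k(q,a_0)\) with \(k(q,a_1)\). The two incoming states are each other's turning partners, and I expect their hexagon loops to be traversed in opposite senses. Again a single representative vertex suffices, by the same symmetries.

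The remaining claims are then immediate. Since \(\mathbb E[1-B/p]=0\) and \(\mathbb E[(1-B/p)^2]=(1-p)/p\), the moment formulas \eqref{eq:corrector-centered} follow. Because \(\xi(s,b)=k(s)(1-b/p)\), the two traversals of one vertex with a common \(b\) and opposite values of \(k\) have exactly opposite corrected increments.
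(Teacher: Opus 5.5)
Your proposal is correct and essentially matches the paper's proof. The paper tabulates all twelve states and records the same identities you isolate, namely $P_0V=V$, $P_0W=-U$ and $P_1W=-V$. It also uses the sign pattern $V=\sigma e_a$, $U=-\sigma e_b$, $W=\sigma e_c$, which gives $C(V-U)=k$, and then substitutes into $h$.

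One caution concerns the symmetry reduction. The cyclic shift of axes and $q\mapsto q+\mathbf 1$ (which negates $V$) split $\mathcal S$ into two orbits of six, the two $\Phi_1$-cycles. So the state identities need two representative states (or the full table), although one representative vertex does suffice for the opposite-sign claim about $k$.
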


\begin{proof}
The entire calculation can be checked from the following table. In its last
column a triple denotes the coordinates of the vector $k$.
\begin{center}
\renewcommand{\arraystretch}{1.12}
\begin{tabular}{cccccc}
\hline
$q$ & $a$ & $V$ & $U$ & $W$ & $k$ \\
\hline
$(0,0,1)$ & $1$ & $e_1$ & $-e_2$ & $e_3$ & $(1,1,-1)$ \\
$(0,0,1)$ & $2$ & $-e_2$ & $e_1$ & $-e_3$ & $(-1,-1,1)$ \\
$(0,1,0)$ & $1$ & $-e_1$ & $e_3$ & $-e_2$ & $(-1,1,-1)$ \\
$(0,1,0)$ & $3$ & $e_3$ & $-e_1$ & $e_2$ & $(1,-1,1)$ \\
$(0,1,1)$ & $2$ & $-e_2$ & $e_3$ & $-e_1$ & $(1,-1,-1)$ \\
$(0,1,1)$ & $3$ & $e_3$ & $-e_2$ & $e_1$ & $(-1,1,1)$ \\
$(1,0,0)$ & $2$ & $e_2$ & $-e_3$ & $e_1$ & $(-1,1,1)$ \\
$(1,0,0)$ & $3$ & $-e_3$ & $e_2$ & $-e_1$ & $(1,-1,-1)$ \\
$(1,0,1)$ & $1$ & $e_1$ & $-e_3$ & $e_2$ & $(1,-1,1)$ \\
$(1,0,1)$ & $3$ & $-e_3$ & $e_1$ & $-e_2$ & $(-1,1,-1)$ \\
$(1,1,0)$ & $1$ & $-e_1$ & $e_2$ & $-e_3$ & $(-1,-1,1)$ \\
$(1,1,0)$ & $2$ & $e_2$ & $-e_1$ & $e_3$ & $(1,1,-1)$ \\
\hline
\end{tabular}
\end{center}
To make the algebra explicit, applying the state maps
\eqref{eq:corrector-state-maps} to these entries gives
\begin{equation}\label{eq:corrector-operator-identities}
 \begin{aligned}
 P_0V&=V, & P_0U&=-W, & P_0W&=-U,\\
 P_1V&=U, & P_1U&=W,  & P_1W&=-V.
 \end{aligned}
\end{equation}
In each row the three velocities have the form
$V=\sigma e_a$, $U=-\sigma e_b$, $W=\sigma e_c$, where
$\{a,b,c\}=\{1,2,3\}$ and $\sigma\in\{-1,1\}$. Hence
\[
 C(V-U)=\sigma(e_a+e_b-e_c)=V-U-W=k.
\]
For a straight step, substitution of
\eqref{eq:corrector-operator-identities} into
\eqref{eq:corrector-definitions} yields
\[
 V+P_0h-h=V-U-W=k.
\]
For a turning step the same substitution yields
\[
 U+P_1h-h
   =\frac1p C(U-V)+V-U-W
   =-\frac{k}{p}+k.
\]
This proves \eqref{eq:corrector-increment}. The table shows both the stated
coordinate values and the sign reversal between the two rows at every
fixed $q$. Finally,
$\mathbb E(1-B/p)=0$ and
$\mathbb E(1-B/p)^2=(1-p)/p$, proving
\eqref{eq:corrector-centered}. At two passages through the same vertex the
value of $B$ is unchanged, so the sign reversal of $k$ proves exact
cancellation.
\end{proof}

\subsection{Exploration stopped at the first completed orbit}

Fix a deterministic initial incoming state. Let $(X_t,A_t)$ be the incoming
directed state of its trajectory in a fixed environment, and let
\[
 \tau=\inf\{t\geq1:(X_t,A_t)=(X_0,A_0)\},\qquad N_n=n\wedge\tau.
\]
The vertex $X_t$ is processed before the move to $X_{t+1}$. In particular,
the processed vertices at horizon $n$ have time indices $0\leq t<N_n$.
Write $S_t=(X_t\pmod2,A_t)$ for the parity state.

The map taking one incoming directed state to the next is a bijection.
To see this, the incoming axis of a target state determines its preceding
edge and vertex, and the pairing at that preceding vertex determines a
unique preceding incoming edge. A repeated directed state before $\tau$
would therefore, by applying the inverse map, imply a return to the
initial state at an earlier positive time. Thus there are no repeated
directed states before $\tau$. Since a vertex has exactly two incoming
states, it is processed at most twice before closure, and two such visits
use different incoming axes.

Let $I_t$ be the indicator that $t<\tau$ and $X_t$ has not previously been
processed. With
$\xi_t=\xi(S_t,B_{X_t})$ before closure, define
\begin{align}
 M_n&=\sum_{0\leq t<N_n} I_t\xi_t,\label{eq:corrector-fresh-martingale}\\
 R_n&=\sum_{v:\,t_2(v)<N_n}\xi_{t_1(v)},\label{eq:corrector-memory-remainder}
\end{align}
where $t_1(v)<t_2(v)$ are the two processing times of $v$, when both exist.
All these quantities are frozen after $\tau$.

\begin{proposition}\label{prop:corrector-decomposition}
Under the independent Bernoulli environment law, $M_n$ is a
square-integrable martingale for the filtration revealing a scatterer
only when its vertex is first processed. Its predictable quadratic
variation satisfies
\begin{equation}\label{eq:corrector-bracket}
 \langle M\rangle_n
   =\frac{1-p}{p}\sum_{0\leq t<N_n}I_t k(S_t)k(S_t)^{\mathsf T},
 \qquad
 Q_n:=\tr\langle M\rangle_n
   =\frac{3(1-p)}p N_{\mathrm{fresh}}(n),
\end{equation}
and consequently $\Ep|M_n|^2=\Ep Q_n$. Moreover, pathwise,
\begin{equation}\label{eq:corrector-revisit-decomposition}
 X_{N_n}-X_0+h(S_{N_n})-h(S_0)=M_n-R_n.
\end{equation}
On $\{\tau\leq n\}$ the left-hand side is zero and $M_n=R_n$.
\end{proposition}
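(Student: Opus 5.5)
The plan is to treat the three assertions separately and to use the facts already recorded: directed states do not repeat before $\tau$, every vertex is processed at most twice with different incoming axes, and Proposition~\ref{prop:corrector} holds.

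\emph{Martingale property and bracket.} Define $\mathcal F_t$ as the $\sigma$-field generated by the scatterers at the vertices processed at times $<t$. The state $S_t$, the event $\{t<\tau\}$ and the indicator $I_t$ are all $\mathcal F_t$-measurable, because the path up to time $t$ is determined by the scatterers already revealed. When $I_t=1$, the scatterer $B_{X_t}$ is independent of $\mathcal F_t$ and has the Bernoulli$(p)$ law. Indeed, $X_t$ is a vertex that has not yet been queried, and the choice of which vertex is queried is itself $\mathcal F_t$-measurable; a standard optional-vertex argument lets us condition on $X_t=v$. Proposition~\ref{prop:corrector} then gives the two identities
\[
\mathbb E[I_t\xi_t\mid\mathcal F_t]=0,\qquad
\mathbb E[I_t\xi_t\xi_t^{\mathsf T}\mid\mathcal F_t]=I_t\frac{1-p}{p}k(S_t)k(S_t)^{\mathsf T}.
\]
The increments are bounded, since $|\xi|\le\sqrt3/p$ and only $n$ terms appear, so $M_n$ is a square-integrable martingale with the stated bracket. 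Each coordinate of $k$ is $\pm1$, so $|k|^2=3$. Taking the trace gives $Q_n=\frac{3(1-p)}pN_{\rm fresh}(n)$, and $\Ep|M_n|^2=\Ep Q_n$ follows from orthogonality of the martingale increments.

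\emph{Pathwise decomposition.} Summing~\eqref{eq:corrector-increment} telescopically over the times $0\le t<N_n$ gives
\[
X_{N_n}-X_0+h(S_{N_n})-h(S_0)=\sum_{t<N_n}\xi_t.
\]
We split this sum into first and second processings. The first processings contribute exactly $M_n$. Each second processing occurs at a time $t_2(v)<N_n$. It uses the same scatterer as the first processing, but enters along the other incoming axis, and by Proposition~\ref{prop:corrector} the two incoming states have opposite values of $k$. Hence $\xi_{t_2(v)}=-\xi_{t_1(v)}$. Summing over these vertices yields $-R_n$, which proves~\eqref{eq:corrector-revisit-decomposition}.

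\emph{The closed case.} If $\tau\le n$, then $N_n=\tau$, so $X_\tau=X_0$ and $S_\tau=S_0$, and the left-hand side vanishes. The decomposition then forces $M_n=R_n$.

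I expect the main obstacle to be the measurability step in the first part: showing that the fresh scatterer is independent of $\mathcal F_t$ even though the vertex being queried is chosen adaptively. I would handle it by decomposing on the $\mathcal F_t$-measurable events $\{X_t=v,\ I_t=1\}$. On each such event, $\mathcal F_t$ is generated by scatterers at vertices other than $v$, so independence follows from the product structure of the environment law.
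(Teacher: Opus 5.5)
Your proposal is correct and follows the paper's proof essentially step for step. Like the paper, you get conditional centering and covariance of the fresh charges from Proposition~\ref{prop:corrector}, telescope the corrected increments, and use exact sign reversal at second visits to identify $-R_n$. Your decomposition over the events $\{X_t=v,\ I_t=1\}$ spells out the independence of the adaptively queried fresh scatterer, a point the paper simply asserts.
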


\begin{proof}
Just before processing $X_t$, the current directed state, the event
$\{t<\tau\}$ and the first-visit indicator $I_t$ are measurable with
respect to the revealed past. On $\{I_t=1\}$ the variable $B_{X_t}$ is
independent of that past and is Bernoulli with parameter $p$. Proposition~
\ref{prop:corrector} therefore gives a conditionally centered increment
and its conditional covariance. The increments are bounded for fixed
$p$, so summing the conditional covariances proves
\eqref{eq:corrector-bracket} and square integrability at every finite
horizon. The trace formula follows from $|k(s)|^2=3$; martingale
orthogonality gives $\Ep|M_n|^2=\Ep Q_n$.

Summing the definition of $\xi_t$ telescopes to the left-hand side of
\eqref{eq:corrector-revisit-decomposition}. Every first visit contributes
to $M_n$. At each second visit Proposition~\ref{prop:corrector} supplies
the negative of that vertex's first charge. These contributions are
exactly $-R_n$. This proves the pathwise identity. At closure the full
incoming state, and hence also its parity state, equals the initial one.
\end{proof}

\subsection{The resampled covariance benchmark}

For comparison only, consider the walk that samples a new independent
Bernoulli scatterer at every encounter, including repeated visits. Its
parity-state transition operator is $P=(1-p)P_0+pP_1$. Both state maps
are permutations, so the uniform measure on $\mathcal S$ is invariant.
The chain is irreducible for $0<p<1$. One way to check the latter statement
directly is to note that $\Phi_1$ has two cycles of length six. One cycle is
\[
 ((0,0,1),1),\ ((0,1,1),2),\ ((0,1,0),3),\
 ((1,1,0),1),\ ((1,0,0),2),\ ((1,0,1),3).
\]
Its complement is the other cycle, and $\Phi_0$ interchanges
$((0,0,1),1)$ and $((1,0,1),1)$, connecting the two cycles in both
directions. The period is two, which does not affect Ces\`aro convergence
of the state distributions.

The twelve-state table gives
\begin{equation}\label{eq:corrector-k-average}
 \frac1{12}\sum_{s\in\mathcal S}k(s)k(s)^{\mathsf T}
 =K:=\begin{pmatrix}
  1&-1/3&-1/3\\
  -1/3&1&-1/3\\
  -1/3&-1/3&1
 \end{pmatrix}.
\end{equation}
In the resampled model every corrected increment is centered, so their
sum is a martingale. Its expected quadratic variation divided by $n$
converges to $((1-p)/p)K$, by the preceding finite-state averaging.
Since $h$ is bounded, Cauchy--Schwarz shows that removing the endpoint
corrector changes the second-moment matrix by $O(\sqrt n)+O(1)$ at fixed
$p$. Thus, for every fixed initial incoming state, the resampled walk
satisfies
\begin{equation}\label{eq:corrector-annealed-covariance}
 \lim_{n\to\infty}\frac1n
 \mathbb E\big[(X_n-X_0)(X_n-X_0)^{\mathsf T}\big]
 =\Sigma_p:=\frac{1-p}{p}K.
\end{equation}
The eigenvalue in the direction $(1,1,1)$ is $(1-p)/(3p)$ and the two
orthogonal eigenvalues are $4(1-p)/(3p)$. This proves the factor-four
anisotropy of the resampled benchmark. Equation~
\eqref{eq:corrector-annealed-covariance} is not asserted for the quenched
mirror trajectory.

\subsection{What the decomposition does not establish}

For a kinetic-age cutoff $L\geq0$, the corresponding part of the memory
remainder is the precisely defined random vector
\[
 R_n^{\geq L}
 =\sum_{v:\,t_2(v)<N_n}
   \ind_{\{p(t_2(v)-t_1(v))\geq L\}}\,\xi_{t_1(v)}.
\]
The choice of vertices in this sum depends on the revealed scatterers and
on the subsequent path. Therefore neither $R_n$ nor $R_n^{\geq L}$ is
shown by Proposition~\ref{prop:corrector-decomposition} to be a
martingale, and their second moments cannot be obtained by summing
individual charge variances. In particular, the proposed uniform
estimate
\[
 \Ep|R_n^{\geq L}|^2\leq C_0p^2L^{-1/2}\Ep Q_n,
 \qquad L\geq1,
\]
with $C_0$ independent of $n,L$ and of sufficiently small $p$, remains a
probabilistic target suggested by the experiments. It is not
a consequence of the exact two-visit cancellation, of the resampled
covariance calculation, or of a small observed density of repeated
vertices. None of the exact identities above, alone, establishes a
positive probability of an infinite quenched trajectory.

\section{Contact geometry and a periodic obstruction}\label{app:geometry}

This appendix gives an exact representation of the model by local
reconnections of elementary hexagons.  Besides providing geometric
checks on the simulations, the representation separates connectivity of
an auxiliary random graph from escape of the routed trajectories.
Throughout, $p$ is the turning probability and $q=1-p$ is the straight
probability.  Coordinates and orientations are those of
Section~\ref{sec:model} and~\eqref{eq:orientation}.

\subsection{The cubic contact graph}

Write $\mathbf 1=(1,1,1)$ and let
\begin{equation}\label{eq:hexagon-labels}
 \mathcal H=\{h\in\Z^3:h_1\equiv h_2\equiv h_3\pmod 2\}.
\end{equation}
For $h\in\mathcal H$, the six corners of $h+[0,1]^3$ other than
$h$ and $h+\mathbf 1$ form a hexagon in the physical graph.  Denote it by
$\mathcal C_h$.

\begin{proposition}[Cubic contact representation]\label{prop:contact}
The all-turn configuration partitions the directed edges of the physical
graph into the hexagons $\mathcal C_h$, $h\in\mathcal H$.
The graph whose vertices are these hexagons and whose edges are their
contacts is isomorphic to the nearest-neighbor graph on $\Z^3$.
An explicit isomorphism is
\begin{equation}\label{eq:contact-transform}
 U(h)=\left(\frac{h_1+h_2}{2},\frac{h_1+h_3}{2},
                  \frac{h_2+h_3}{2}\right).
\end{equation}
Declare a contact edge open when its physical vertex is straight.
Under the independent mirror measure, the resulting contact graph is
independent bond percolation of parameter $q$ on $\Z^3$.
Every physical trajectory is confined to one open contact component.
A finite component with $n$ hexagons contains exactly $6n$ physical
directed edges.
\end{proposition}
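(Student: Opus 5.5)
The plan is to derive everything from a single local fact: at every vertex \(x\in V\), the two ways of pairing incoming and outgoing edges correspond to the two hexagons through \(x\). I would establish this first, then read off the partition, the isomorphism, the percolation law, and the confinement in turn.

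\emph{Step 1: which hexagons contain a vertex.} A cube \(h+[0,1]^3\) with \(h\in\mathcal H\) contains \(x\) as a corner iff \(h=x-\varepsilon\) with \(\varepsilon\in\{0,1\}^3\) and \(x-\varepsilon\equiv c\mathbf 1 \pmod 2\) for some \(c\in\{0,1\}\).
\begin{itemize}
\item The two values of \(c\) give exactly two candidates, \(\varepsilon\) and \(\mathbf 1-\varepsilon\).
\item Since \(x\bmod 2\notin\{(0,0,0),(1,1,1)\}\), \(\varepsilon\) is nonconstant. Hence \(x\) is never the excluded corner \(h\) or \(h+\mathbf 1\).
\item So every physical vertex lies on exactly two hexagons, \(\mathcal C_{x-\varepsilon}\) and \(\mathcal C_{x-\mathbf 1+\varepsilon}\).
\end{itemize}

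\emph{Step 2: the all-turn orbits are the hexagons.} At \(p=1\) the successor map reduces, on parity states, to \(\Phi_1\). Appendix~\ref{app:corrector} shows that \(\Phi_1\) has two \(6\)-cycles. I would check that the listed cycle, lifted to \(\Z^3\) using the orientation \eqref{eq:orientation}, closes after six steps with zero displacement. Its parity vectors are the six nonconstant ones, so the lifted orbit traverses the six corners \(\mathcal C_h\) of one cube, and the other cycle does the same by symmetry. Since the all-turn successor is a permutation of directed states, its cycles partition the directed edges, and each cycle is some \(\mathcal C_h\). The main obstacle is this orientation bookkeeping: one must confirm that both oriented hexagons through \(x\) are consistent with the fixed line directions \(s_a\), and not just their vertex sets. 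I would settle it directly from the twelve-row table in the proof of Proposition~\ref{prop:corrector}, using the column \(U=P_1V\).

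\emph{Step 3: the contact graph is \(\Z^3\).} By Step 1, two hexagons share a vertex iff \(h'-h=\mathbf 1-2\varepsilon\) for a nonconstant \(\varepsilon\). These are the six vectors in \(\{\pm1\}^3\setminus\{\pm\mathbf 1\}\), and each contact corresponds to exactly one physical vertex.
\begin{itemize}
\item \(U\) is linear, and \(U(1,1,-1)=e_1\), \(U(1,-1,1)=e_2\), \(U(-1,1,1)=e_3\). So contact edges map to nearest-neighbour edges.
\item \(U\) maps \(\mathcal H\) bijectively onto \(\Z^3\): the inverse is \(h=(u_1+u_2-u_3,\;u_1-u_2+u_3,\;-u_1+u_2+u_3)\), which always has all coordinates of equal parity.
\end{itemize}
Since contact edges are in bijection with vertices and each \(\omega_x\) is independent, declaring an edge open when \(\omega_x=0\) gives iid bond percolation on \(\Z^3\) with parameter \(q\).

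\emph{Step 4: confinement and the \(6n\) count.} At a turning vertex the pairing coincides with the all-turn pairing, so the trajectory stays on its current hexagon. At a straight vertex the pairing swaps the two hexagon continuations, so the trajectory passes to the other hexagon through \(x\), across an open contact edge. Hence the trajectory never crosses a closed contact, and every directed edge it uses lies in hexagons of a single open component. For a finite component with \(n\) hexagons, the directed edges of those hexagons are disjoint by Step 2, giving exactly \(6n\). The routing permutation maps this set into itself, because at its closed boundary contacts the pairing is the all-turn one.
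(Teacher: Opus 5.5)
Your proposal is correct. Steps 1, 3 and 4 essentially reproduce the paper's argument: the two lower corners $x-\varepsilon$ and $x-(\mathbf 1-\varepsilon)$, the difference vectors in $\{\pm1\}^3\setminus\{\pm\mathbf 1\}$, the inverse of $U$, and the exchange of hexagon continuations at straight vertices.

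You differ from the paper on the partition claim. The paper argues geometrically and never uses Appendix~\ref{app:corrector}. It writes down the directed hexagon \eqref{eq:even-hexagon} for $h=0$, and notes that even translations preserve all arrows while odd translations reverse them. It then shows directly that each physical edge lies in exactly one $\mathcal C_h$, by choosing the lower corner: $h_a=t$, with the transverse coordinates fixed by parity. You instead get the partition of directed states into orbits for free from bijectivity of the all-turn successor. Identifying orbits with hexagons then reduces to two facts:
\begin{itemize}
\item a finite check on the two six-cycles of $\Phi_1$, namely that the velocities $V$ sum to zero around each, which the twelve-row table gives;
\item invariance of the orientation under $2\Z^3$.
\end{itemize}
Your route makes disjointness automatic and reuses the parity-state bookkeeping. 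The paper's route is self-contained and produces the explicit port order used later in \eqref{eq:contact-port-orders}.

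Two details need tightening. First, you show that every all-turn orbit is some $\mathcal C_h$, but the statement also requires every $\mathcal C_h$, $h\in\mathcal H$, to occur. This is immediate. An orbit visits each of the six nonconstant parity classes exactly once, so the two incoming states at $x$ lie on two disjoint orbits, hence on two distinct hexagons through $x$. By Step 1 these are exactly $\mathcal C_{x-\varepsilon}$ and $\mathcal C_{x-\mathbf 1+\varepsilon}$, and every $\mathcal C_h$ has a corner.

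Second, the inference ``its parity vectors are the six nonconstant ones, so the lifted orbit traverses $\mathcal C_h$'' is not valid from parities alone. In every unit cube, including those whose lower corner is outside $\mathcal H$, the six nonconstant-parity corners are joined by an undirected hexagon of physical edges. What identifies $h$ is the explicit lift. For example, the lift starting at $(0,0,1)$ on axis $1$ visits the corners of $(-1,-1,1)+[0,1]^3$ other than $(-1,-1,1)$ and $(0,0,2)$. State that computation rather than the parity remark.
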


\begin{proof}
For $h=0$, the orientation gives the directed hexagon
\begin{equation}\label{eq:even-hexagon}
 (1,0,0),(1,1,0),(0,1,0),(0,1,1),(0,0,1),(1,0,1),(1,0,0).
\end{equation}
Translation by a vector with all coordinates even preserves the
orientation.  Translation by a vector with all coordinates odd reverses
all arrows, so the corresponding hexagon is traversed in the opposite
order.  At each of its corners the path changes coordinate axis.

For completeness, each physical edge belongs to exactly one of these
hexagons.  Suppose the edge is parallel to coordinate axis $a$, and
let its endpoint with smaller $a$-coordinate have that coordinate $t$.
The lower corner of a unit cube containing the edge must have
$h_a=t$.  For each transverse coordinate $b$, exactly one of the two
choices $h_b=v_b$ and $h_b=v_b-1$ has the same parity as $t$.
This selects a unique $h\in\mathcal H$.  The edge lies among the six
edges of $\mathcal C_h$, because the two transverse coordinates of an
active axis have different parities.  This proves the partition.

At a physical vertex $v$, let $r\in\{0,1\}^3$ be its coordinatewise
parity vector.  Since $v$ belongs to the physical graph, $r$ is neither
$0$ nor $\mathbf 1$.  Exactly two of the selected cubes have $v$ as a
hexagon corner, with lower corners
\begin{equation}\label{eq:contact-endpoints}
 h_0=v-r,\qquad h_1=v-(\mathbf 1-r).
\end{equation}
Their difference belongs to
\begin{equation}\label{eq:contact-vectors}
 h_1-h_0=2r-\mathbf 1
 \in\{\pm(1,1,-1),\ \pm(1,-1,1),\ \pm(-1,1,1)\}.
\end{equation}
Conversely, any pair in $\mathcal H$ with one of these differences has
exactly one physical contact vertex.

The inverse of~\eqref{eq:contact-transform} is
\begin{equation}\label{eq:contact-inverse}
 h=(u_1+u_2-u_3,\ u_1-u_2+u_3,\ -u_1+u_2+u_3).
\end{equation}
It maps $\Z^3$ bijectively onto $\mathcal H$, and $U$ maps the three
positive vectors displayed in~\eqref{eq:contact-vectors} to the three
standard coordinate vectors.  This proves the graph isomorphism and
the bijection between physical vertices and contact edges.

At a turn vertex, each incoming edge continues around its own
hexagon.  At a straight vertex, the two continuations are exchanged,
and each incoming edge continues along the other hexagon.
Independence of the physical vertex choices therefore gives the stated
independent bond measure.  A trajectory can change its hexagon label
only across an open contact edge, which proves confinement.
Finally, the hexagons are disjoint as collections of directed edges
and each has six such edges.
\end{proof}

In particular, if $q$ is below the critical probability for bond
percolation on the cubic lattice, every physical trajectory is finite
almost surely.  The converse does not follow from the contact
representation, even at the deterministic level, as the construction
below demonstrates.

\subsection{Circuits inside a contact component}

\begin{proposition}[Cycle-surplus bound]\label{prop:cycle-surplus}
Let a finite open contact component have $n$ vertices and $m$ edges,
and let $k$ be the number of physical routing loops that it contains.
Writing $\beta=m-n+1$ for its cycle surplus, there is an integer $g\geq0$
such that
\begin{equation}\label{eq:cycle-surplus-identity}
 k=\beta+1-2g.
\end{equation}
Consequently,
\begin{equation}\label{eq:cycle-surplus-bound}
 1\leq k\leq\beta+1,\qquad
 k\equiv\beta+1\pmod 2,
\end{equation}
and its longest routing loop has length at least $6n/(\beta+1)$.
If the contact component is a tree, it contains exactly one routing
loop, of length $6n$.
\end{proposition}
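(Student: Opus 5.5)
The plan is to view the routing inside a finite open contact component as a ribbon graph (a fat graph) and to read the identity \eqref{eq:cycle-surplus-identity} as Euler's formula for the associated closed orientable surface. By Proposition~\ref{prop:contact}, the $n$ hexagons of the component are the vertices and the $m$ open contacts are the edges. Each hexagon is a directed $6$-cycle, so it is a disc with an oriented boundary. At an open contact, i.e.\ a straight physical vertex, the two incoming edges exchange continuations between the two hexagons. Locally, this is exactly the effect of attaching a band between the two boundary circles at the contact point. Trajectories inside the component are confined to it, so they are precisely the boundary components of the surface built from $n$ discs and $m$ bands. The first step is to check that each band reconnects the oriented boundaries consistently, so that the surface is orientable. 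At a straight vertex the two hexagons traverse the shared vertex along opposite active axes, and the swap of the directed continuations is the standard oriented band move: it either merges two boundary cycles into one or splits one into two. This compatibility of orientations is the step I expect to need the most care. I would verify it directly from the local pairing rule \eqref{eq:successor}, using the fact noted in the proof of Proposition~\ref{prop:contact} that odd translates reverse orientation. Equivalently, I would check that no band ever produces a Möbius twist, i.e.\ that the number of boundary cycles always changes by exactly $\pm1$ under a single contact swap.

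Given orientability, the count follows from Euler's formula. The surface $\Sigma$ has Euler characteristic $\chi=n-m=1-\beta$. It is connected, because the component is connected. Capping its $k$ boundary circles with discs gives a closed orientable surface of some genus $g\ge0$, with $2-2g=\chi+k$. This yields $k=\beta+1-2g$, which is \eqref{eq:cycle-surplus-identity}. As a purely combinatorial alternative that avoids topology, I would induct on $m$. Start from a spanning tree, where adding the $n-1$ tree contacts one at a time always merges two distinct loops, leaving $1$ loop. Each of the remaining $\beta$ contacts changes $k$ by $\pm1$. This gives both $k\le\beta+1$ and the parity statement $k\equiv\beta+1\pmod 2$. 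The lower bound $k\ge1$ is trivial, since the component contains directed edges.

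The remaining claims are counting. The component carries exactly $6n$ directed edges by Proposition~\ref{prop:contact}, and these are partitioned among the $k\le\beta+1$ loops. Hence some loop has length at least $6n/k\ge 6n/(\beta+1)$. For a tree, $\beta=0$ forces $k=1$ and $g=0$, so the single loop has length $6n$.
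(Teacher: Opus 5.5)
Your proposal is correct, and your main route differs from the paper's. The paper argues purely by exchanges. Starting from the all-turn configuration with \(n\) hexagonal cycles, each straight contact exchanges two outgoing successors, which joins two cycles or splits one. The \(n-1\) spanning-tree exchanges are performed first, and confinement to the current forest components forces each of them to be a join. The integer \(g\) is then \emph{defined} as the number of joins among the remaining \(\beta\) exchanges, giving \(k=1-g+(\beta-g)\). That is exactly your combinatorial alternative. There you should add the paper's justification that a tree contact merges: its two hexagons lie in different components of the current forest, so they carry different routing cycles. You should also note that exchanges at distinct vertices commute, so the order is free. Your primary argument instead treats the component as a ribbon graph. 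The rotation \(\sigma\) advances to the next open port in a hexagon's directed order, and the involution \(\alpha\) crosses an open contact. The faces of this map, the cycles of \(\sigma\alpha\), are precisely the routing loops, and the identity is Euler's formula \(n-m+k=2-2g\). This gives \(g\) an intrinsic meaning as the genus of that surface, so it is independent of the spanning tree and of the exchange order, which the paper's proof does not show directly. The cost is invoking Euler's formula for maps, which is itself usually proved by an induction of the paper's kind. The orientability step you flagged as delicate is in fact automatic. A rotation system always defines an orientable surface, because the surface is built abstractly rather than embedded. Equivalently, composing the successor permutation with a transposition always changes its number of cycles by exactly one. The fact that odd translates reverse the embedded orientation therefore plays no role.
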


\begin{proof}
Start with all turns on the $n$ hexagons, giving $n$ disjoint cycles.
Changing one contact to straight exchanges the two outgoing
successors assigned to its incoming edges.  Such an exchange joins
two cycles when its incoming edges lie on different cycles, and
splits one cycle into two otherwise.  This follows by cutting the two
successor arrows and reconnecting their four ends.

The exchanges at distinct physical vertices act on disjoint pairs of
outgoing edges, so their order may be chosen arbitrarily.  First
perform the exchanges on a spanning tree of the component.  At each
step, the new edge joins two components of the previously added
forest.  By confinement, its two incoming edges belong to different
routing cycles.  Thus all $n-1$ of these exchanges are joins, leaving
one cycle.  There remain $\beta$ exchanges.  If $g$ of them are joins,
the other $\beta-g$ are splits, and hence
$k=1-g+(\beta-g)$.  This proves the identity and its consequences,
using the total length $6n$ from Proposition~\ref{prop:contact}.
\end{proof}

The following infinite-volume observation identifies one situation in
which contact connectivity does suffice.

\begin{proposition}[Infinite tree components]\label{prop:infinite-tree}
Every physical orbit in an infinite tree contact component is infinite.
\end{proposition}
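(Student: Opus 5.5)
Suppose some physical orbit in an infinite tree contact component \(\mathcal T\) were finite. We will derive a contradiction by reducing to the finite tree case of Proposition~\ref{prop:cycle-surplus}.

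\textbf{Reduction to a finite subtree.} A finite orbit is a directed cycle \(\gamma\) of some length \(\ell\). By confinement (Proposition~\ref{prop:contact}), \(\gamma\) uses directed edges of finitely many hexagons, all in \(\mathcal T\), and passes through finitely many physical contact vertices. Let \(S\) be the finite set of hexagons met by \(\gamma\), and let \(\mathcal T'\subset\mathcal T\) be the smallest subtree containing \(S\). It is finite, and since it is a subgraph of a tree, it is itself a tree. Now consider the modified configuration in which every open contact edge of \(\mathcal T\) not in \(\mathcal T'\) is declared closed (turning). Then \(\mathcal T'\) is a finite tree component of the modified configuration.

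\textbf{Key point: the modification does not change \(\gamma\).} We need every vertex visited by \(\gamma\) to be a contact edge of \(\mathcal T'\) or a closed contact. Let \(v\) be a vertex of \(\gamma\). If \(v\) is closed (turn), it is unaffected by the modification. If \(v\) is straight, then \(\gamma\) continues along the other hexagon at \(v\). Hence \(\gamma\) contains directed edges of both hexagons incident to \(v\), both endpoints of that contact lie in \(S\), and the contact edge lies in \(\mathcal T'\): in a tree, the unique edge between two vertices of \(S\) belongs to the minimal subtree spanning \(S\). So the mirror value at every vertex of \(\gamma\) is unchanged, and \(\gamma\) remains an orbit of the modified configuration.

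\textbf{Conclusion.} By Proposition~\ref{prop:cycle-surplus}, the finite tree component \(\mathcal T'\), with \(n'\ge2\) hexagons, contains exactly one routing loop, and that loop has length \(6n'\) and uses all its directed edges. Hence \(\gamma\) traverses every hexagon of \(\mathcal T'\). But \(\mathcal T\) is infinite, so \(\mathcal T'\) has a vertex \(h\) with an open edge of \(\mathcal T\) leaving \(\mathcal T'\), to some \(h'\notin\mathcal T'\). At that physical vertex the mirror is straight in the original configuration. The original orbit \(\gamma\) passes through it along an edge of \(\mathcal C_h\) and so continues into \(\mathcal C_{h'}\). Then \(h'\in S\subset\mathcal T'\), a contradiction.

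The case \(n'=1\) is excluded similarly: if \(S=\{h\}\), then \(\gamma\) is the hexagon \(\mathcal C_h\), which needs all six of its contacts closed, contradicting that \(h\) has a neighbor in the infinite component. The main obstacle is the middle step, which requires that every straight vertex on \(\gamma\) corresponds to an edge of \(\mathcal T'\). The "exchange" description in the proof of Proposition~\ref{prop:contact} gives exactly this.
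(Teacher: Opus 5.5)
Your proof is correct, but it takes a different route from the paper's.

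\textbf{The paper's route.} The paper argues directly in the infinite tree. The hexagon-label changes of a finite orbit form a closed walk in the contact tree, so every open contact the orbit crosses is crossed equally often in both directions. Each open contact has only one directed physical crossing each way, and the orbit uses each at most once, so it uses both or neither. Following the cyclic order of any hexagon the orbit touches then shows that the orbit contains all six of its edges and every incident open contact. This propagates through the whole infinite component, which is a contradiction.

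\textbf{Your route.} You localize instead. A finite orbit depends only on the mirrors at the vertices it visits. Any straight vertex it visits has both incident hexagons in \(S\), so its contact lies in the spanning subtree \(\mathcal T'\). Closing everything outside \(\mathcal T'\) therefore leaves \(\gamma\) an orbit. You then import the finite statement from Proposition~\ref{prop:cycle-surplus}: a finite tree component carries a single loop through all \(6n'\) directed edges. The contradiction comes from a boundary contact of \(\mathcal T'\) that is straight in the original configuration.

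\textbf{What each buys.} Your approach makes the infinite case a corollary of the finite join/split count, and it isolates a reusable locality principle for finite orbits. The paper's approach is self-contained and needs no surgery on the configuration. Its crossing-parity argument also reproves, on its own, that a finite tree component carries exactly one loop.

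\textbf{Two small remarks.} Your separate treatment of \(n'=1\) is unnecessary. The cycle-surplus proposition already covers a single hexagon, with \(\beta=0\) and one loop of length \(6\). Also, the assertion that \(\mathcal T'\) is exactly a component of the modified configuration needs a one-line justification. In a tree, any open edge joining two vertices of a subtree already belongs to that subtree. Every other open edge at a vertex of \(\mathcal T'\) lies in \(\mathcal T\setminus\mathcal T'\) and has been closed.
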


\begin{proof}
Suppose that a physical orbit is finite.  Its changes of hexagon label
form a closed walk in the contact tree.  For any contact edge crossed
by this walk, deleting that edge separates the tree into two parts,
so the walk crosses it equally often in both directions.
At a given open contact, there are only two directed physical
crossings, one in each direction.  A routing orbit uses neither more
than once before closure, because its successor map is a bijection.
It therefore uses both whenever it uses either.

We claim that if the orbit contains one edge of a hexagon, it contains
all six.  Follow the directed cyclic order of that hexagon.  At a
closed contact, the next edge is on the orbit immediately.  At an
open contact, the orbit crosses to the neighboring hexagon; the
opposite crossing, which also belongs to the orbit by the preceding
paragraph, returns along precisely the next edge of the original
hexagon.  This proves the claim successively around the hexagon.
It also shows that every open contact incident to that hexagon is
used.  Repeating the argument over the connected component forces
the finite orbit to contain every hexagon in an infinite component,
a contradiction.
\end{proof}

\subsection{An explicit periodic configuration}

We now give the construction for
Theorem~\ref{thm:periodic-obstruction}.  The contact coordinates are
denoted by $u$, and $e_1,e_2,e_3$ are their standard basis vectors.
Specify an undirected contact bond by its positive endpoint
description $\{u,u+e_j\}$, with $j\in\{1,2,3\}$.
The configuration is periodic under $2\Z^3$ in contact coordinates.
Its open bonds are exactly those in Table~\ref{tab:periodic-bonds},
together with their $2\Z^3$ translates; all other bonds are closed.

\begin{table}[htbp]
\centering
\caption{Open positive contact directions in the periodic configuration.
Each row specifies all $j$ for which $\{u,u+e_j\}$ is open.}
\label{tab:periodic-bonds}
\begin{tabular}{c|c@{\qquad}c|c}
\hline
$u\bmod 2$ & $j$ & $u\bmod 2$ & $j$\\
\hline
$(0,0,0)$ & $1,2$   & $(1,0,0)$ & $1,2$\\
$(0,0,1)$ & $1,2,3$ & $(1,0,1)$ & $1$\\
$(0,1,0)$ & $1,2$   & $(1,1,0)$ & $3$\\
$(0,1,1)$ & $2$     & $(1,1,1)$ & $1$\\
\hline
\end{tabular}
\end{table}

To verify the routing, it is convenient to label the six physical
vertices on a hexagon by its contact ports.  A port $+j$ joins the
hexagon at $u$ to that at $u+e_j$; a port $-j$ joins it to that at
$u-e_j$.  The directed port orders, indexed by $k=0,\ldots,5$, are
\begin{equation}\label{eq:contact-port-orders}
 (d_0(u),\ldots,d_5(u))=
 \begin{cases}
 (-3,1,-2,3,-1,2),&u_1+u_2+u_3\text{ even},\\
 (-3,2,-1,3,-2,1),&u_1+u_2+u_3\text{ odd}.
 \end{cases}
\end{equation}
Indeed, the even order follows from~\eqref{eq:even-hexagon} and
\eqref{eq:contact-endpoints}; the odd order reverses it.
The parity of the coordinates of $h$ in~\eqref{eq:contact-inverse}
is the parity of $u_1+u_2+u_3$, so these are the required orders
throughout the contact lattice.

A routing state $(u,k)$ represents the physical directed edge ending
at port $d_k(u)$ of the hexagon at $u$.  Put
$a(d)=\operatorname{sgn}(d)e_{|d|}$.  If the contact at this port is
closed, the successor state is $(u,k+1)$, with port indices taken
modulo six.  If it is open, put $u'=u+a(d_k(u))$ and let $k'$ be the
unique index satisfying $d_{k'}(u')=-d_k(u)$.  The successor is then
$(u',k'+1)$.  Thus, on the infinite lattice,
\begin{equation}\label{eq:contact-successor}
 T(u,k)=
 \begin{cases}
 (u,k+1),&\text{the contact is closed},\\
 (u+a(d_k(u)),k'+1),&\text{the contact is open}.
 \end{cases}
\end{equation}
This rule includes the physical step from the selected contact to
the next port, so one iteration corresponds to one physical edge.

\begin{proof}[Proof of Theorem~\ref{thm:periodic-obstruction}]
First, the infinite open contact graph is connected.
Its quotient modulo $2\Z^3$ is connected: the successive residues
\[
 000, 100, 110, 111, 011, 001
\]
are joined by open bonds, the residue $010$ is joined to $000$,
and $101$ is joined to $001$.
Here and below a three-digit string abbreviates a vector in
$\{0,1\}^3$.
In addition, the three paths starting at $0$ with respective step
sequences
\[
 (e_1,e_1),\qquad (e_2,e_2),\qquad
 (e_1,e_2,e_3,e_1,e_2,e_3)
\]
consist entirely of open bonds.  Their endpoint differences are
$2e_1$, $2e_2$, and $2(e_1+e_2+e_3)$, which generate $2\Z^3$.
By periodicity and
reversal of these paths, the component of $0$ contains every
$2\Z^3$ translate of $0$.  Combining them with a lift of a path in
the connected quotient shows that every contact vertex belongs to
this component.

It remains to verify that all physical orbits are finite.
For a residue $\bar u\in\{0,1\}^3$, enumerate the 48 quotient
routing states by
\begin{equation}\label{eq:periodic-state-number}
 s(\bar u,k)=6(4\bar u_1+2\bar u_2+\bar u_3)+k,
 \qquad 0\leq k\leq5.
\end{equation}
Substitution of Table~\ref{tab:periodic-bonds} into
\eqref{eq:contact-successor} gives the following single cycle,
with the final state followed by $0$:
\begin{equation}\label{eq:periodic-fortyeight-cycle}
\begin{aligned}
(&0,10,18,19,20,8,32,33,34,6,7,21,\\
 &22,42,40,12,13,3,4,24,25,39,43,44,\\
 &45,46,47,23,11,35,30,31,9,1,27,28,\\
 &29,5,17,41,36,37,38,26,2,14,15,16).
\end{aligned}
\end{equation}
Every integer from $0$ through $47$ occurs exactly once.
The net displacement in the infinite contact lattice must also be
checked, since a quotient cycle alone would not imply closure.
For transparency, the states at which this cycle crosses an open
contact in each signed direction are listed in
Table~\ref{tab:periodic-crossings}.  Each coordinate has the same
number of positive and negative crossings.  The net contact
displacement is therefore zero.

\begin{table}[htbp]
\centering
\caption{States in~\eqref{eq:periodic-fortyeight-cycle} that cross an
open contact.  All unlisted states remain on their current hexagon.}
\label{tab:periodic-crossings}
\begin{tabular}{c|l|l}
\hline
Coordinate & Positive crossings & Negative crossings\\
\hline
$1$ & $1,11,17,29,31,47$ & $4,8,22,26,34,40$\\
$2$ & $5,7,13,23,25$    & $2,10,16,20,38$\\
$3$ & $9,39$            & $0,42$\\
\hline
\end{tabular}
\end{table}

Consequently, every lift of this quotient cycle returns after 48
steps to its initial contact label and port.  It returns to the
same physical directed edge, since the contact-state description is
bijective.  No earlier return is possible, because the 48 quotient
states are distinct.  Every physical directed edge is a lift of one
of these states, so every physical orbit has length exactly 48.

Finally, the induced physical mirror assignment is periodic.
Adding $4e_a$ to a physical vertex adds an even vector to both of its
contact coordinates in~\eqref{eq:contact-endpoints}--
\eqref{eq:contact-transform}.  Thus the physical assignment is
invariant under $4\Z^3$, completing the construction.
\end{proof}

The configuration has 13 open contact bonds per period cell, out of
24 positive bonds.  Its role is deterministic: it disproves the
implication that an infinite open contact component must contain an
infinite routed trajectory.  It does not give a positive-probability
event for the infinite independent mirror measure, and therefore
does not contradict delocalization in that measure.  Any proof based
on the cubic contact representation must also control how the local
reconnections distribute the component's edges among routing loops.

\end{document}